\title{A Denotational Semantics for Quantum Loops}
\author{Nicola Assolini}
\email{nicola.assolini@univr.it}
\affiliation{%
  \institution{University of Verona}
  \city{Verona}
  \country{Italy}
}
\author{Alessandra Di Pierro}
\email{alessandra.dipierro@univr.it}
\affiliation{%
  \institution{University of Verona}
   \city{Verona}
  \country{Italy}
}
\keywords{Quantum Computing, Quantum Program Semantics, Denotational Semantics}
\begin{document}

\renewcommand{\textasteriskcentered}{*}

\newcommand*{\ud}{\:=\:}
\newcommand*{\df}{\:\Coloneqq\:}
\newcommand*{\arrw}[1]{\overline{#1}}
\newcommand*{\roundB}[1]{(#1)}
\newcommand*{\namRule}[1]{(\textbf{#1})}
\newcommand*{\quotes}[1]{`#1'}

\newcommand*{\Hilb}{\mathcal{H}}
\newcommand*{\Dom}{\Hilb_P}
\newcommand*{\Fock}{\mathcal{F}}
\newcommand*{\Gard}{\mathcal{G}}
\newcommand*{\Uni}{\mathcal{U}}
\newcommand*{\Bnd}{\mathcal{B}}
\newcommand*{\UniSub}{\mathcal{U}\mathcal{S}}
\newcommand*{\Opr}{\mathcal{O}}
\newcommand*{\Ident}{\mathbf{I}}
\newcommand*{\bev}{\textbf{G}}
\newcommand*{\sumshort}{\mathbb{S}}
\newcommand{\Cfun}{\mathcal{C}}
\newcommand{\lub}{l.u.b.}

\newcommand*{\cS}{\mathcal{S}}
\newcommand*{\cA}{\mathcal{A}}
\newcommand*{\cL}{\mathcal{L}}

\newcommand*{\CN}{CNOT}


\newcommand*{\qift}[2]{\texttt{if}\,#1\,\texttt{do}\ \{#2\}}
\newcommand*{\whle}[2]{\texttt{while}\ #1\,\texttt{do}\ \{#2\}}
\newcommand*{\whleB}[3][k]{\texttt{while}^{#1}\ #2\,\texttt{do}\ \{#3\}}
\newcommand*{\wt}{\texttt{w}}
\newcommand*{\Wt}{\texttt{W}}
\newcommand*{\lt}{\texttt{l}}
\newcommand*{\Lt}{\texttt{L}}
\newcommand*{\skp}{\texttt{skip}}

\newcommand*{\lnSem}[1]{{\llbracket\,#1\,\rrbracket}}
\newcommand*{\unSem}[1]{{[\,#1\,]}}
\newcommand*{\prj}[1]{\texttt{P}_{#1}}
\newcommand*{\inferr}[3]{\inference[(\textbf{#1})\:]{#2}{#3}}
\newcommand*{\sRule}[3][]{#2\!\rightarrow^{#1}\!#3}
\newcommand*{\opSem}[2]{[#1\ ,\, #2]}



\newcommand{\TODO}[1][]{{\color{red}[TODO: #1]}}
\newcommand*{\ciao}[3][7]{(#1,#2,#3)}

\begin{abstract}
    Programming a quantum computer, i.e., implementing quantum algorithms on a quantum processor-based copmputer architecture, is a task that can be addressed (just as for classical computers) at different levels of abstraction.
    This paper proposes a denotational semantics for high-level quantum programming constructs, focusing on the conceptual meaning of quantum-controlled branching and iteration. 
    We introduce a denotational domain where a mathematical meaning of a quantum control flow with loops can be defined, which reflects the coherent evolution of the quantum system implementing the program.
\end{abstract}

\maketitle

\section{Introduction}
\label{intro}
A crucial part of a computer program is its control flow.
In classical computing, control flow refers to the sequencing and branching of instructions within a program, enabling the program to make decisions and alter its behavior based on specific conditions.
This also implies the possibility of writing programs with conditional loops.

The control flow of quantum programs cannot be interpreted in the same way as for classical programs due to the properties of the target physical device (where they are intended to be executed), which behaves according to the laws of quantum mechanics.
Notably, a quantum processor is able to work on \textit{superpositions} of states (qubits) rather than on single ones and, in a more strikingly different way from a classical computer, it can generate states which are \textit{entangled}, i.e., tied to each other by a strong (non-classical) correlation.
Moreover, while the results of the execution of a classical program are immediately available whenever the program reaches the final statement, accessing the results of a quantum program is not so straightforward, due to the so-called measurement problem in the theory of quantum mechanics. In fact, although quantum theory is, up to now, the most precise description of how the world behaves, the interpretation of such a behaviour is controversial and the debate on which, among the several interpretations that have been proposed, is the right one is still open and represents the main problem for a complete understanding of quantum physics.

When analysing a quantum program and studying its mathematical behaviour, it is inevitable to refer to the interpretation of quantum mechanics. Typically, the quantum programming language literature refers to one of the most accredited interpretations, which goes under the name of the Copenhagen interpretation. According to this interpretation, the results of the execution of a quantum program can only be obtained when the coherent (i.e. in superposition) execution (or evolution of the quantum system) collapses into a classical state, which occurs randomly both in time (at a given average rate), and in space (according to the Born rule). This explanation avoids the measurement problem and leads to modelling the result of a quantum program essentially as a probability distribution on all its possible outcomes.

In this paper, we aim at a description of a quantum program that is as general as possible by avoiding an explicit syntactic construct for measurement, which would effectively lead to a (classical) probabilistic semantics of the program.
Moreover, in giving a meaning to a quantum program, we will concentrate on the denotational approach, following Strachey's observation that fixing the domain within which programs in a given language have their meanings tell us a great deal about the language, and is a sure guide to the design of the language \cite{Stoy}. 

At this point, a crucial question arises: \textit{how can we handle termination in quantum loops?}
In the Turing machine model of classical computation, we can use a \textit{halting bit} to signal termination.  
However, as analyzed in~\cite{linden1998halting,miyadera2005halting,myers1997universal,shi2002remarks,song2008unsolvability}, defining such a halting qubit is not possible is not possibile in a Turing machine model for quantum computation, without compromising the all computation.  
Unlike classical loops, where execution can be stopped based on the guard's value, measuring a quantum guard would make a quantum superposition collapse to a classical state, thus altering the computation itself.  
Quantum languages that rely on measurement-based control flow circumvent this issue but at the cost of introducing non-unitary behavior.  

An even deeper problem, also highlighted in the works mentioned above, is that a quantum loop can terminate on some inputs while diverging on others, leading to a superposition of terminating and non-terminating states.  
Since quantum computation is performed by unitary operators transforming quantum states (or superpositions) into quantum states,
to determine from the outside whether a quantum execution has reached a fixed point, one should either arbitrarily interrupt the computation after a finite number of steps, thus making it terminate on all inputs,  or let it evolve indefinitely.
If restricting our consideration to finite computations may seem a solution, it is actually not a very satisfactory one, as it would prevent a suitable definition of a semantics for formally reasoning about termination.   
, 


%
Therefore, just like for the calssical case, having a semantics that captures also infinite computation is crucial for correctly modeling quantum programs, although in the quantum case, achieving this result is more problematic for the reasons discussed above. 
In this paper we show how a model that characterizes which parts of the execution terminate and which do not in a quantum loop can be defined by approximating the concrete unitary behaviour of a quantum program by some linear operators which are able to `separate' executions that have reached a fixed point from those that are still computing even in the limit, i.e. in the case of an infinite loop. In fact, by relaxing the unitarity constraint, we are able to construct an approximating sequence of linear operators which converges to a mathematically well-defined limit. While for finite computations, this corresponds exactly to the actual (unitary) behaviour of a quantum program, in the presence of infinite computations, it is a linear operator with a norm strictly less than one due to the portion of the superposition on which computation is still going on.  
%

\section{Background}
Programs in a quantum programming language are designed to run on quantum computers and are very different from classical computer programs. 
The design and implementation of such languages requires a sound knowledge of the principles of quantum mechanics and the underlying mathematics.
In this section, we briefly recall the main aspects of quantum computation that make this computational model different from the classical one. We will refer to the circuit model of computation and highlight such differences in terms of the meaning of wires and gates in a classical and a quantum circuit. 

In a quantum circuit, wires represent quantum bits, or qubits, rather than bits. 
The classical unit of information (the bit) generates, with its two values $0$ and $1$, a complex vector space (a quantum system), where each complex vector of norm $1$ is the state of a qubit. 
This is, therefore, a linear combination of the form $\ket{\psi} = \alpha \ket{0} + \beta \ket{1}$, where $\alpha$ and $\beta$ are complex numbers from which we can infer the probability of the state resulting (after measurement) in $1$ or $0$, respectively. 
Such probabilities are obtained as $|\alpha |^2$ and $| \beta |^2$, which explains why quantum states must be normalized vectors, i.e. $|\alpha |^2 + | \beta |^2 = 1$ must hold\footnote{The \textit{ket} notation $\ket{\psi}$ is due to Dirac and represents the vector $(\alpha,\beta)^T$ in linear algebraic notation.}. Such vectors live in a complex Hilbert space, equipped with the $\ell^2$ norm~\footnote{The $\ell^2$ norm of a vector $\ket{\psi} = [\alpha_1, \alpha_2, \dots, \alpha_n]^T$ is defined as $ \|\ket{\psi}\| = \Sigma_{i=1}^{n} \sqrt{|\alpha_i|^2}$.}~\cite{heinosaari2008guidemathematical}.  
The state of $n$ qubits corresponds to a unit vector in the  $2^n$-dimensional Hilbert space ($\Hilb^{2^n}$) obtained by composing by tensor product the normalized states corresponding to each wire (qubit), i.e. a vector in a 2-dimensional complex Hilbert space ($\Hilb^2$) \cite[Chapter 2]{MichealANielsen}.
For instance, the space of two qubits is $\Hilb^4 = \Hilb^2 \otimes \Hilb^2$ and a generic state $\ket{\psi}$ in $\Hilb^4$ can be written as $
\ket{\psi} = \alpha_0\ket{00} + \alpha_1\ket{01} + \alpha_2\ket{10} + \alpha_3\ket{11}$, where all $\alpha_i$ are complex numbers.

Throughout the paper, we will adopt the following convention.
For a variable $q$, we will write $\ket{\psi}_q$ to indicate that $q$ represents the state $\ket{\psi}$ of a qubit register. 
For multi qubits states, such as for example $\sfrac{1}{\sqrt{2}}(\ket{01} + \ket{10}$, we will write $\sfrac{1}{\sqrt{2}}(\ket{01} + \ket{10})_{p,q}$ to indicate that variable $p$ represents the first qubit and $q$ represents the second qubit of the entangled pair.


\section{Quantum while language}\label{sec:psyntax}
To present our semantics, we introduce a generic quantum language with a minimal set of constructs consisting of  quantum loop iteration, sequential composition, and unitary transformation.
The syntax of our simple language, which we will refer to as $SL$, is given by the following grammar defining a program $s$ as follows:
\begin{equation}\label{eq: syntax}
    \begin{aligned}
        s \df&\ U(\arrw{q})\ \big|\ s; s\ \big|\ \skp\ \big|\ \whle{q}{s},
    \end{aligned}
\end{equation}
where $q$ is a quantum variable and $\arrw{q}$ denotes a sequence $q^1,\dots,q^{n}$ of quantum variables.
In $SL$, we do not define a command for state's initialization (or assignment) since we assume that all variables are initialized to $\ket{0\dots0}$.
All operations on variables are performed by statement $U(\arrw{q})$ corresponding to applying a unitary transformation $U$ to $\arrw{q}$. 
Every possible quantum transformation, except measurement, is a composition of unitary transformations.
The principle of deferred measurement~\cite[Chapter 4]{MichealANielsen} guarantees that all intermediate measurements in a quantum circuit can be moved to the end of the computation; thus, excluding measurement operation from our languages does not cause a loss of generality. Moreover, as explaind in Section~\ref{intro}, this will make our treatment independent of any specific interpretation of quantum mechanics.
Since measurement does not occur in $SL$ programs, their semantics is a deterministic transition between quantum states according to the physical law (the Shrödinger equation) governing the quantum system on which the program is intended to be executed, without having to establish when quantum mechanics should leave space to classical mechanics (which essentially constitutes the measurement problem in quantum mechanics).

\subsection{The State Space}
To define the semantics of $SL$, we consider a domain similar to the one described in~\cite[Chapter 3]{MingFoundations}.
Given a program $s$, let $ Q_{s}$ be the set of all variables occurring in $s$.
Each quantum variable $q \in Q_s$ has a type $\Hilb_q$, and it is interpreted as a vector $\ket{\varphi}_q$ in its own Hilbert space $\Hilb_q$.
For example, for a Boolean variable $b$, the state $\ket{\varphi}_b = \alpha\ket{0} + \beta\ket{1}$, $\alpha,\beta \in \mathbb{C}$ corresponds to a vector in the complex Hilbert space of dimension $2$ (the state space of 1 qubit), which is the type of $b$.
We define
\begin{equation}
    \Hilb_{Q_s} = \bigotimes_{q\in Q_s} \Hilb_q,
\end{equation}
as the space of the types of all variables in $Q$.
We consider an infinitely countable set $T = \{t_i\}$ of ancillary quantum boolean variables, which are necessary to perform quantum while loops, and their Hilbert space:
\begin{equation}
    \Hilb_{T} = \bigotimes_{t_i \in T} \Hilb_{t_i} \text{ where } \Hilb_{t_i} = \Hilb^2
\end{equation}
Finally, we call $\Dom = \Hilb_{T} \otimes \Hilb_{Q_s}$ the Hilbert space of the program $s$.

\section{Unitary Semantics}
A quantum language with no classical operations can be completely described by using unitary operators that can be visually represented by quantum circuits.
A mathematical description of these circuits is by means of linear algebra and, in particular, by the group of unitary operators on a Hilbert space.

We start by defining the semantics for each statement in $SL$ except the $\texttt{while}$ statement. 
\begin{definition}\label{def:unSem}
    Let $\Uni(\Dom)$ be the group of unitary operators from $\Dom$ to $\Dom$.
    The unitary semantics is the function $\unSem{\cdot}: s \to \Uni(\Dom)$, defined by:
\begin{enumerate}
    \item $\unSem{\skp} = \Ident_{\Dom}$;
    \item $\unSem{U(\arrw{q})} = \arrw{U}$, where $\arrw{U} = \Ident \otimes U \otimes \Ident$, i.e. the extension of $U$ on $\Dom$;
    \item $\unSem{s_1;\, s_2} = \unSem{s_2}\cdot\unSem{s_1}$.
\end{enumerate}
\end{definition}
Rule \namRule{2} corresponds to computing the operator $U$ on $\arrw{q}$, and the identity on the other components of the program space. Rule \namRule{3} models sequential composition by means of matrices multiplication (the group operation).
Since $\arrw{U}$ and $\Ident$ are unitary operators, the semantics of each statement is a unitary operator.
Therefore, every statement can be represented by a circuit as shown in \autoref{fig: Uq and seq}.
The case for the while is slightly more involved, and we treat it in the next section.

\begin{figure}
    \centering
    \begin{subfigure}[b]{0.48\linewidth}
        \centering
        \begin{quantikz}[wire types={q,n,q,b},classical gap=0.07cm,row sep=0.3em, column sep = 0.7em]
            \lstick[2]{$T$} \ t_{1} \ & & &\\
            \vdots & & & \\
            \lstick[2]{$Q_s$} \ \overline{q} \ & \qwbundle{n}  & \gate{U} & \\
            & & & \\
        \end{quantikz}
        \centering
        \caption{}
        \label{fig: Uq}
    \end{subfigure}
    \begin{subfigure}[b]{0.48\linewidth}
        \centering
        \begin{quantikz}[wire types={q,n,b},classical gap=0.07cm,row sep=0.3em, column sep = 0.7em]
            \lstick[2]{$T$} \ t_{1} \ & \gate[3]{s_1}  & \gate[3]{s_2} &\\
            \vdots & & & \\
            \lstick{$Q_s$} & & & \\
        \end{quantikz}
        \caption{}
        \label{fig: seq}
    \end{subfigure}
    \caption{The circuits corresponding to $U(\arrw{q})$~\textbf{(a)} and $s_1;s_2$~\textbf{(b)}.}
    \label{fig: Uq and seq}
\end{figure}
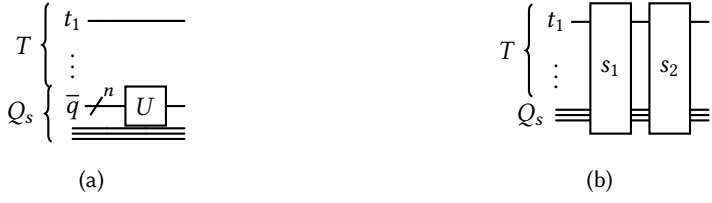

\subsection{While Loop Semantics}\label{sec: while sem}
A well-formed $\texttt{while}$ statement should modify its guard within the loop body.  
To define a quantum $\texttt{while}$ instruction, we need a representation where the guard qubit can be updated while preserving the unitarity of the evolution.  
Ideally, given a program $p = \qift{q}{U(q)}$ \footnote{We use here the `if' notation as a shortcut for indicating one iteration of the while statement}, we would like to have a unitary operator $C$ such that $C(\alpha\ket{0}_q + \beta\ket{1}_q) = \alpha\ket{0}_q + \beta U\ket{1}_q$.
However, in quantum computing, controlled operations cannot have their controller affected by the target, as this would break unitarity.  
For example, consider $p = \qift{q}{X(q)}$. The corresponding operator $C$ would map $\ket{0}_q$ to $\ket{0}_q$ and $\ket{1}_q$ to $\ket{0}_q$, which is clearly a non-injective, and thus non-unitary.
We must introduce an auxiliary qubit to model a self-controlled operation within a unitary framework on which to copy information via a \CN\ gate.  
\autoref{fig:ifCirc} illustrates a hypothetical quantum if-statement where the guard is included in the body.  
The semantics can be represented by a unitary operator $CU$ such that $CU(\ket{0}_t \otimes (\alpha\ket{0}_q + \beta\ket{1}_q)) = \alpha\ket{0}_t\ket{0}_q + \beta \ket{1}_t U\ket{1}_q$.  
Extending this approach to a while loop introduces an additional challenge: each iteration requires a fresh temporary qubit, which implies the need for an infinite set of auxiliary qubits to model non-terminating loops.  
\autoref{fig:whileCirc} shows the circuit representation of a quantum while loop, which is recursively defined and corresponds to an infinite composition of unitary operations on an infinite-dimensional Hilbert space.

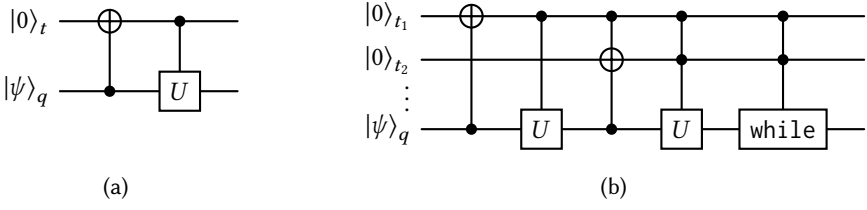
\begin{figure}
    \centering
    \begin{subfigure}[b]{0.3\linewidth}
        \centering
        \begin{quantikz}[wire types={q}]
            \lstick{$\ket{0}_t$} & \targ{} & \ctrl{1} & \\
            \lstick{$\ket{\psi}_q$} & \ctrl{-1} & \gate{U} & \\        
        \end{quantikz}
        \centering
        \caption{}
        \label{fig:ifCirc}
    \end{subfigure}
    \begin{subfigure}[b]{0.64\linewidth}
        \centering
        \begin{quantikz}[wire types={q,q,n,q},row sep=0.7em]
            \lstick{$\ket{0}_{t_{1}}$} & \targ{} & \ctrl{3} & \ctrl{1} & \ctrl{1} & \ctrl{1} & \\
            \lstick{$\ket{0}_{t_{2}}$} & & & \targ{} & \ctrl{2} & \ctrl{2} & \\
             \lstick{$\vdots$} & & & & & & \\
             \lstick{$\ket{\psi}_{q}$} & \ctrl{-3} & \gate{U} & \ctrl{-2} & \gate{U} & \gate{\texttt{while}} & 
        \end{quantikz}
        \caption{}
        \label{fig:whileCirc}
    \end{subfigure}
    \caption{Quantum circuits for $\qift{q}{U(q)}$ ~\textbf{(a)} and $\whle{q}{U(q)}$~\textbf{(b)}}
\end{figure}

The first step in defining the semantics of a while loop is to start with the controlled operation.
In general, consider a Hilbert space $ \Hilb = \Hilb_g \otimes \Hilb_s $, where $ \Hilb_g $ is a 2-dimensional Hilbert space representing the control (guard) qubit, and $\Hilb_s$ is the space of target qubits.
Given a unitary operator $U_s$ in $\Hilb_s$, the unitary operator corresponding to $U_s$ controlled by $g$ can be represented by $\ketbra{0}_g \otimes \Ident_s + \ketbra{1}_g \otimes U_s$, where $\Ident_s$ is the identity in $\Hilb_s$ and $\ketbra{0}_g$ and $\ketbra{1}_g$ are the projector on $\ket{0}_g$ and $\ket{1}_g$ respectively in $\Hilb_g$.
\begin{proposition}\label{prop: ctrl}
    Let $\prj{i_{g}} = (\ketbra{i}_g \otimes \Ident_s)$ be the projector $\ketbra{i}_g$ extended to the whole $\Hilb$,
    and let $\arrw{U} = ( \Ident_g \otimes U_s)$ be the extension of $U_s$ in $\Hilb$.
    The controlled operation $\ketbra{0}_g \otimes \Ident_s + \ketbra{1}_g \otimes U_s$ is equivalent to the operator $\prj{0_{g}} + \prj{1_{g}}\cdot\arrw{U}$.
\end{proposition}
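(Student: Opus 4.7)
The plan is to prove the equality by a direct algebraic manipulation using the mixed-product property of the tensor product, namely $(A \otimes B)(C \otimes D) = (AC) \otimes (BD)$ whenever the dimensions match. Since both sides of the claimed identity are operators on $\Hilb = \Hilb_g \otimes \Hilb_s$, it is enough to show that they coincide as such operators.

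First, I would substitute the definitions of $\prj{1_g}$ and $\arrw{U}$ into the second summand on the right-hand side, obtaining
\begin{equation*}
    \prj{1_g} \cdot \arrw{U} = (\ketbra{1}_g \otimes \Ident_s)\,(\Ident_g \otimes U_s).
\end{equation*}
Then, applying the mixed-product property, this simplifies to
\begin{equation*}
    (\ketbra{1}_g \cdot \Ident_g) \otimes (\Ident_s \cdot U_s) = \ketbra{1}_g \otimes U_s.
\end{equation*}
The first summand $\prj{0_g}$ is by definition $\ketbra{0}_g \otimes \Ident_s$, so the sum equals $\ketbra{0}_g \otimes \Ident_s + \ketbra{1}_g \otimes U_s$, which is exactly the controlled operator as originally written.

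There is no substantive obstacle here: the proposition is essentially a notational restatement, and the only point worth being explicit about is the use of the mixed-product identity to absorb the identity factors. Optionally, one could also verify the equality on a basis $\{\ket{i}_g \otimes \ket{\varphi}_s : i \in \{0,1\}\}$ by checking that both operators send $\ket{0}_g \otimes \ket{\varphi}_s$ to $\ket{0}_g \otimes \ket{\varphi}_s$ and $\ket{1}_g \otimes \ket{\varphi}_s$ to $\ket{1}_g \otimes U_s\ket{\varphi}_s$, but this is just an unpacking of the tensor computation above.
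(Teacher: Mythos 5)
Your proposal is correct and follows essentially the same route as the paper: both apply the mixed-product property $(A \otimes B)(C \otimes D) = AC \otimes BD$ to identify $\prj{1_g}\cdot\arrw{U}$ with $\ketbra{1}_g \otimes U_s$, the only cosmetic difference being the direction in which you read the identity. Nothing is missing.
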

\begin{proof}
    By the property $AC \otimes BD = (A \otimes B)(C \otimes D)$~\cite[Lemma 4.2.10]{roger1991topics} we have $\ketbra{1}_g \otimes U_s = (\ketbra{1}_g \cdot \Ident_g) \otimes (\Ident_s \cdot U_s) = (\ketbra{1}_g \otimes \Ident_s) ( \Ident_g \otimes U_s) = \prj{1_{g}}\cdot\arrw{U}$.
    Thus $\ketbra{0}_g \otimes \Ident_s + \ketbra{1}_g \otimes U_s = \prj{0_{g}} + \prj{1_{g}}\cdot\arrw{U}$.
\end{proof}

As shown in \autoref{fig:whileCirc}, to perform a quantum loop, we need to make a quantum \quotes{copy} of the guard variable in a new fresh ancillary variable for every iteration.
This is realized by a \CN\ gate, which we will denote by $\bev(q, n)$ in our semantics, where $q$ is a variable in $Q_s$ and $n \in \mathbb{N}$, $n \ge 1$ indicates the target $t_n \in T$.

Given a qubit $q \in Q_s$, and a unitary operator $S$ that does not act on $t_1$, we recursively define an operator $\wt_n(q, S)$, using the control operation in the format introduced in \autoref{prop: ctrl}, as follows:
\begin{equation}\label{eq: wnricdef}
    \begin{aligned}
        \wt_0(q,S) &= \Ident\\
        \wt_n(q,S) &= (\prj{0_{t_1}} + \prj{1_{t_1}} \cdot \cS(\wt_{n-1}(q,S)) \cdot S) \bev(q,1).
    \end{aligned}
\end{equation}
where $\cS$ shifts the controls to keep the first ancilla qubit free. 
In particular, $\cS(\bev(q,n)) = \bev(q,n+1)$ and $\cS(\prj{j_{t_n}}) = \prj{j_{t_{n+1}}}$.
Let us look closer to $\wt_i(q, S)$.
The operator $\wt_i(q, S)$ corresponds to the circuit depicted in \autoref{fig:whileRec}. Specifically, it includes the component $\bev(q, 1)$, which represents the first controlled-not operation, and it incorporates the operator defined by $(\prj{0_{t_1}} + \prj{1_{t_1}} \cdot \cS(\wt_{n-1}(q, S)) S)$. 
This expression represents the operation $S$ followed by $\cS(\wt_{n-1}(q, S))$ controlled by $t_1$. 
Here, $\cS(\wt_{n-1}(q, S)) S$ denotes the composition of the unitary $S$, which encodes the semantics of the body of the while loop, and $\cS(\wt_{n-1}(q, S))$, which captures the semantics of the remaining part of the while loop.
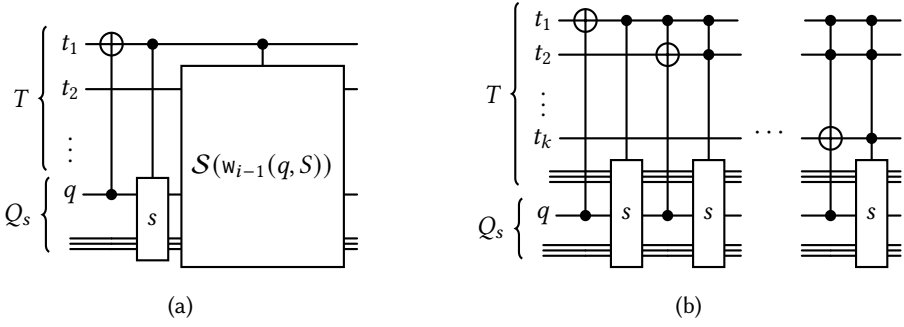
\begin{figure}
    \centering
    \begin{subfigure}[b]{.48\linewidth}
        \centering
        \begin{quantikz}[wire types={q,q,n,q,b},classical gap=0.07cm,row sep=0.35em, column sep = 0.5em]
            \lstick[3]{$T$} \ t_{1} \ & \targ{} & \ctrl{3} & \ctrl{1} &\\
                \ t_{2} \ & & & \gate[4]{\cS(\wt_{i-1}(q,S))} &\\
                \ \vdots \ & & &\\[.25em]
            \lstick[2]{$Q_s$} \ q \ & \ctrl{-3} & \gate[2]{s} & &\\
                & & & &
        \end{quantikz}
        \caption{}\label{fig:whileRec}
    \end{subfigure}
    \begin{subfigure}{.48\linewidth}
        \centering
        \begin{quantikz}[wire types={q,q,n,q,b,q,b},classical gap=0.07cm,row sep=0.35em, column sep = 0.5em]
            \lstick[5]{$T$} \ t_{1} \ & \targ{} & \ctrl{4} & \ctrl{1} & \ctrl{1} & \midstick[7,brackets=none]{$\cdots$} & \ctrl{1} & \ctrl{1} & \\
                \ t_{2} \ & & & \targ{} & \ctrl{4} & & \ctrl{2} & \ctrl{2} & \\
                \ \vdots \ & & & & & & & & \\
                \ t_{k} \ & & & & & & \targ{} & \ctrl{2} & \\
                \  \ & & \gate[3]{s} & & \gate[3]{s} & & & \gate[3]{s} & \\ [.15em]
            \lstick[2]{$Q_s$} \ q \ & \ctrl{-5} & & \ctrl{-4} & & & \ctrl{-2} & &  \\
                & & & & & & & &
        \end{quantikz}
        \caption{}
        \label{fig:whileBcirc}
    \end{subfigure}
    \caption{Finite $k$-while loop}
\end{figure}

\begin{restatable}{proposition}{propunind}
The closed formula of \autoref{eq: wnricdef}, is:
\begin{align}\label{eq: Wnclose}
    \Wt_n(q, S) = \sum_{h=1}^{n}(\prod_{i=1}^{h-1}(\prj{1_{t_i}})\! \cdot\! \prj{0_{t_h}} \! \cdot\! \prod_{i=n-h}^{n-2}(\bev(q,n-i)\,S) \cdot \bev(q,1)) + \prod_{i=1}^{n}\prj{1_{t_i}} \!\cdot\! \prod_{i=0}^{n-1}S\bev(q,n-i).
\end{align}
\end{restatable}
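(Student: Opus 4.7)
The plan is to prove \eqref{eq: Wnclose} by induction on $n$, showing that the recursion \eqref{eq: wnricdef} unfolds exactly into the claimed closed form $\Wt_n(q,S)$. The base case $n=1$ is a direct substitution: \eqref{eq: wnricdef} gives $\wt_1(q,S) = \prj{0_{t_1}}\bev(q,1) + \prj{1_{t_1}}\cdot S \cdot \bev(q,1)$, and the right-hand side of \eqref{eq: Wnclose} at $n=1$ produces the same two terms, with the empty products $\prod_{i=1}^{0}$ and $\prod_{i=-1}^{-1}$ read as the identity.

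For the inductive step I assume the closed form for $\wt_{n-1}$ and substitute it into the recursion
$$\wt_n(q,S) = \prj{0_{t_1}}\bev(q,1) + \prj{1_{t_1}} \cdot \cS(\wt_{n-1}(q,S)) \cdot S \cdot \bev(q,1).$$
The shift $\cS$ acts termwise by $\prj{j_{t_i}} \mapsto \prj{j_{t_{i+1}}}$ and $\bev(q,k) \mapsto \bev(q,k+1)$. I then identify three groups of summands on the right-hand side that match the three pieces of \eqref{eq: Wnclose}: (i) the $h=1$ summand of $\Wt_n$ is exactly $\prj{0_{t_1}}\bev(q,1)$ and comes from the first summand of the recursion; (ii) for each $h \in \{1,\dots,n-1\}$, the $h$-th summand of $\Wt_{n-1}$ after applying $\cS$, left-multiplied by $\prj{1_{t_1}}$ and right-multiplied by $S\cdot\bev(q,1)$, will match the $(h+1)$-th summand of $\Wt_n$ under the reindexing $h' = h+1$; (iii) the final all-ones summand of $\Wt_{n-1}$ is transformed analogously into the final all-ones summand of $\Wt_n$.

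The main obstacle is the careful bookkeeping of index shifts. Two routine but delicate checks are needed for case (ii). First, that $\prj{1_{t_1}} \cdot \prod_{i=1}^{h-1}\prj{1_{t_{i+1}}} = \prod_{i=1}^{h}\prj{1_{t_i}}$, so that the prepended $\prj{1_{t_1}}$ restores the leading position of the projection block. Second, that appending $S \cdot \bev(q,1)$ to the shifted central product (which after $\cS$ runs over $i = n-1-h$ to $n-3$ and ends in $\bev(q,2)$) reconstructs the required product $\prod_{i=n-h'}^{n-2}(\bev(q,n-i)\,S)\cdot\bev(q,1)$: the shifted trailing $\bev(q,2)$ together with the new factor $S$ supplies the extra term at $i = n-2$, and the trailing $\bev(q,1)$ of $\Wt_n$ is provided by the final factor of $\bev(q,1)$ from the recursion.

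An entirely analogous computation handles group (iii): after $\cS$ the projection block becomes $\prod_{i=1}^{n-1}\prj{1_{t_{i+1}}}$, which combined with $\prj{1_{t_1}}$ yields $\prod_{i=1}^{n}\prj{1_{t_i}}$; and the shifted operator product $\prod_{i=0}^{n-2}S\,\bev(q,n-i)$ is extended by the trailing factor $S\cdot\bev(q,1)$ exactly to the missing boundary term $i = n-1$, yielding $\prod_{i=0}^{n-1}S\,\bev(q,n-i)$ as required. Once these reindexings are verified, summing the three groups reproduces \eqref{eq: Wnclose} at $n$, closing the induction.
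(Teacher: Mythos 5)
Your proof is correct and follows essentially the same route as the paper's: induction on $n$, unfolding the recursion by applying the shift $\cS$ termwise to the closed form at the previous stage, prepending $\prj{1_{t_1}}$, appending $S\cdot\bev(q,1)$, and reindexing so that the extra factors are absorbed as the new boundary terms of the products. The only nit is in the base case: the empty middle product at $n=1$, $h=1$ is $\prod_{i=0}^{-1}$ (lower index exceeding the upper one), not $\prod_{i=-1}^{-1}$; the verification itself is unaffected.
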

The proof is given in \autoref{appx}.

Given a $\whle{q}{s}$ statement, we can build the chain of finite unitary approximation $\{\Wt(q,\unSem{s}\}_n$.
To define the semantics of the general construct, we must now consider the case of an infinite loop.
However, we can show that there is no limit to this sequence.
From~\cite{vanini_analysis_nodate}, we recall the notion of \emph{strong convergence} and an important theorem about the convergence of an infinite sequence of operators.
\begin{definition}\label{def: strongconv}
    Let $T_n$ and $T$ be linear operators from $\Dom$ to itself.
    If $\|T_n \ket{\psi} - T \ket{\psi}\| \to 0$ as $n \to \infty$, $\forall \ket{\psi} \in \Dom$, then the sequence of operators $\{T_n\}$ is \textbf{strongly convergent} to $T$ (denoted as $T_n \to T$).
\end{definition}
\begin{theorem}\label{th: strongconv}
    Let $\{T_n\}_n$ be a sequence of bounded linear operators from $X \to Y$, where $X$ and $Y$ are Banach spaces.
    $T_n \to T$ if and only if the sequence $\{\|T_n\|\}_n$ is bounded and the sequence $\{T_n x\}_n$ is a Cauchy sequence in $Y$ for all $x \in M \subset X$, where the span of $M$ is dense in $X$.
\end{theorem}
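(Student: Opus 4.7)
The plan is to prove the two implications separately, exploiting that $X$ and $Y$ are Banach spaces to apply the uniform boundedness principle in the forward direction and a density/$\epsilon/3$ argument in the reverse direction.

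For the forward direction, assume $T_n \to T$ strongly. Convergent sequences in $Y$ are Cauchy, so $\{T_n x\}_n$ is Cauchy for every $x \in X$ (hence in particular for $x \in M$, for any choice of $M$). To show $\{\|T_n\|\}_n$ is bounded, I would invoke the Banach--Steinhaus theorem (uniform boundedness principle): since $\{T_n x\}_n$ converges in $Y$ for every $x \in X$, it is bounded in $Y$ pointwise; the completeness of $X$ then forces $\sup_n \|T_n\| < \infty$.

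For the reverse direction, assume $C := \sup_n \|T_n\| < \infty$ and that $\{T_n x\}_n$ is Cauchy for every $x \in M$. First, I would extend the Cauchy property to the linear span of $M$ by linearity, since a finite linear combination of Cauchy sequences in $Y$ is Cauchy. Next, I would upgrade it to all of $X$ using density and boundedness: given $x \in X$ and $\epsilon > 0$, choose $y \in \mathrm{span}(M)$ with $\|x-y\| < \epsilon/(3C)$, and write
\begin{equation*}
\|T_n x - T_m x\| \le \|T_n(x-y)\| + \|T_n y - T_m y\| + \|T_m(y-x)\| \le \tfrac{2\epsilon}{3} + \|T_n y - T_m y\|.
\end{equation*}
Picking $n,m$ large enough that the middle term is below $\epsilon/3$ yields the Cauchy property of $\{T_n x\}_n$ in $Y$. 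Since $Y$ is complete, $Tx := \lim_n T_n x$ exists; linearity of $T$ follows from linearity of each $T_n$ and continuity of vector-space operations, and boundedness follows from $\|Tx\| = \lim_n \|T_n x\| \le C\|x\|$. By construction $T_n \to T$ strongly.

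The only subtle point is the appeal to Banach--Steinhaus in the forward direction, which is where the completeness hypothesis on $X$ is actually used; everything else is a routine density and $\epsilon/3$ estimate that relies on completeness of $Y$ only to produce the limit operator $T$.
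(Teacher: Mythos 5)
The paper does not actually prove this theorem: it is recalled verbatim from a cited analysis reference and used as a black box, so there is no in-paper argument to compare yours against. Your proof is the standard textbook argument for this classical result (it appears, e.g., as the Banach--Steinhaus-based convergence criterion in Kreyszig), and it is correct and complete: the forward direction correctly uses that strong convergence gives pointwise convergence (hence pointwise Cauchy and pointwise boundedness) and then invokes the uniform boundedness principle, which is exactly where completeness of $X$ enters; the reverse direction correctly extends the Cauchy property from $M$ to $\mathrm{span}(M)$ by linearity and then to all of $X$ by the $\epsilon/3$ estimate, with completeness of $Y$ producing the limit operator $T$ and $\|T\| \le C$ following from continuity of the norm. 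The only cosmetic quibble is the division by $3C$, which silently assumes $C > 0$; the case $C = 0$ is trivial (all $T_n = 0$) but you could write $\|x - y\| < \epsilon/(3(C+1))$ to avoid the caveat. You also implicitly read the biconditional in the only sensible way, namely that the right-hand side guarantees the \emph{existence} of a strong limit $T$ rather than convergence to a prescribed one; that is the intended reading and your construction of $T$ as the pointwise limit is the right way to discharge it.
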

We recall that a sequence $\{x_n\}_n$ in a Hilbert space is said to be a Cauchy sequence if, for any positive real number $\epsilon > 0$, there exists a positive integer $N$ such that for all positive integers $m$ and $n$ greater than $N$, the distance $\|x_m - x_n\| < \epsilon$.

As an example, consider the quantum program $\whle{q}{\skp}$, evaluated on $\ket{\psi}_P = \ket{0\dots}_T\ket{1}_q$.
This loop corresponds to the following chain:
\begin{equation}\label{eq: esUni3}
    \begin{aligned}
    \Wt_0(q, X_q)\ket{0\dots}_T\ket{1}_q &= \ket{0\dots}_T\ket{1}_q, \\
    \Wt_1(q, X_q)\ket{0\dots}_T\ket{1}_q &= \ket{10\dots}_T\ket{1}_q, \\
    \Wt_n(q, X_q)\ket{0\dots}_T\ket{1}_q &= \ket{1^{\otimes n}0\dots}_T\ket{1}_q.
\end{aligned}
\end{equation}
We see that for all $n$, $\|\Wt_{n+1}(q, X_q)\ket{\psi}_P - \Wt_n(q, X_q)\ket{\psi}_P\| = 2$. More in general, if 
$$
\ket{\psi}_P = \ket{0\dots}_T(\alpha\ket{0} + \beta\ket{1})_q,
$$
we have 
$$
\Wt_n(q, X_q)\ket{\psi}_P = \alpha\ket{0\dots}_T\ket{0}_q + \beta\ket{1^{\otimes n}0\dots}_T\ket{1}_q,
$$
and for all $n$, $\|\Wt_{n+1}(q, X_q)\ket{\psi}_P - \Wt_n(q, X_q)\ket{\psi}_P\| = 2\beta^2$.
Since for infinite vectors the distance between the element of $\{\Wt_n(q, X_q)\ket{\psi}_P\}_n$ is constant, the sequence $\{\Wt_n(q, X_q)\ket{\psi}_P\}_n$ is not Cauchy, and for \autoref{th: strongconv} the sequence $\{\Wt_n(q, X_q)\}_n$ is not strongly convergent.
The unitary semantics is therefore only able to model a while statement limited to a certain finite number of iterations $k$, which we will denote by $\whleB{q}{s}$.
The semantics for this statement is defined as follows:
$$
\unSem{\whleB{q}{s}} = \Wt_k(q,\unSem{s}).
$$
This unitary operator can be represented by the circuit in \autoref{fig:whileBcirc}.

\subsection{Examples}\label{sec: unsem ex}
We show how the operator $\Wt_n$ works by means of some examples.

First we consider $\whleB{q}{X(q)}$ with the state $\ket{\psi} = \sfrac{1}{\sqrt{2}}\ket{0\dots}_{T}(\ket{0}_q+\ket{1}_q))$ as input:
\begin{equation}\label{eq: esUni1}
    \begin{aligned}
    \Wt_0(q,X_q)\ket{\psi} &= \sfrac{1}{\sqrt{2}}\ket{0\dots}_{T}(\ket{0}_q+\ket{1}_q)\\
    \Wt_1(q,X_q)\ket{\psi} &= \sfrac{1}{\sqrt{2}}(\ket{0\dots}_{T}\ket{0}_q + \ket{10\dots}_{T}\ket{0}_q)\\
    \Wt_2(q,X_q)\ket{\psi} &= \sfrac{1}{\sqrt{2}}(\ket{0\dots}_{T}\ket{0}_q + \ket{10\dots}_{T}\ket{0}_q)\\
    \dots
    \end{aligned}
\end{equation}
Note that with this input, the while loop fully terminates, and in fact, $\Wt_n(q, X_q)$ reaches a fixpoint.

Things are different when non-termination is involved, as in the loop $\whle{q}{H(q)}$ with the same input $\ket{\psi}$:
\begin{equation}\label{eq: esUni2}
    \begin{aligned}
    \Wt_0(q,H_q)\ket{\psi} &= \sfrac{1}{\sqrt{2}}\ket{0\dots}_{T}(\ket{0}_q+\ket{1}_q)\\
    \Wt_1(q,H_q)\ket{\psi} &= \sfrac{1}{\sqrt{2}}\ket{0\dots}_{T}\ket{0}_q + \sfrac{1}{2}(\ket{10\dots}_{T}\ket{0}_q - \ket{10\dots}_{T}\ket{1}_q)\\
    \Wt_2(q,H_q)\ket{\psi} &= \sfrac{1}{\sqrt{2}}\ket{0\dots}_{T}\ket{0}_q + \sfrac{1}{2}\ket{10\dots}_{T}\ket{0}_q \\
    &+ \sfrac{1}{\sqrt{8}}(\ket{110\dots}_{T}\ket{0}_q - \ket{110\dots}_{T}\ket{1}_q)\\
    \Wt_3(q,H_q)\ket{\psi} &= \sfrac{1}{\sqrt{2}}\ket{0\dots}_{T}\ket{0}_q + \sfrac{1}{2}\ket{10\dots}_{T}\ket{0}_q \\
    &+ \sfrac{1}{\sqrt{8}}\ket{110\dots}_{T}\ket{0}_q +
    \sfrac{1}{4}(\ket{1110\dots}_{T}\ket{0}_q - \ket{1110\dots}_{T}\ket{1}_q)\\
    \Wt_n(q,H_q)\ket{\psi} &= \Sigma_{i=0}^{n}\sfrac{1}{\sqrt{2^{i+1}}} \ket{1^{\otimes i}0\dots}_{T}\ket{0}_q - 
    \sfrac{1}{\sqrt{2^{n+1}}} \ket{1^{\otimes n}0\dots}_{T}\ket{1}_q.
    \end{aligned}
\end{equation}  
Since each $t_j$ controls the execution of the $j$-th iteration, $t_j = 1$ indicates that the $j$-th iteration has been executed (see \autoref{fig:whileBcirc} and \autoref{fig:whileCirc}). 
During each iteration of $\wt_n$, the terminating part of the state—where $q$ equals $\ket{0}$—is gradually increased. In contrast, the portion corresponding to non-termination, where $q$ equals $\ket{1}$, decreases but never fully reaches zero.

Finally, consider again the while loop $\whle{q}{\skp}$, evaluated on $\ket{0\dots}_T\ket{1}_q$ in \autoref{eq: esUni3}.
Here for a divergent loop, each $\wt_n$ differs from the previous ones, reflecting that the unitary semantics corresponds to the partial results of the divergent loop computation.

\section{Linear Semantics}\label{sec: lnSem}
To model infinite computation in quantum computing, we need to enlarge the domain of denotations so as to include an appropriate limit.
To this purpose, we observe that a unitary operator is also bounded and, therefore, can be seen as an element of the Banach Space of linear bounded operators on $\Dom$.

We start by defining the linear semantics for each statement in $SL$, except the \texttt{while} statement.
\begin{definition}
    Let $\Bnd(\Dom)$ be the space of linear bounded operators from $\Dom$ to $\Dom$ equipped with the operator norm $\|A\| = \sup_{\|\ket{\psi}\| = 1} \|A\ket{\psi}\|$.
    The linear semantics is a function $\lnSem{\cdot}: s \to \Bnd(\Dom)$, defined by:
    \begin{enumerate}
        \item $\lnSem{\skp} = \Ident_{\Dom}$;
        \item $\lnSem{U(\arrw{q})} = \arrw{U}$, where $\arrw{U} = \Ident \otimes U \otimes \Ident$, i.e. the extension of $U$ on $\Dom$;
        \item $\lnSem{s_1;\, s_2} = \lnSem{s_2}\cdot\lnSem{s_1}$;
\end{enumerate}
\end{definition}
Note that the linear semantics of these statements is exactly the same as the unitary semantics defined in \autoref{def:unSem}.
For the \texttt{while} statement, we need instead to introduce a new bounded linear operator, which will allow us to give a meaning also to infinite computations.

Let $S$ be a bounded linear operator and $q \in Q_s$. We define the operator $\lt_n(q,S)$ by:
\begin{equation}\label{eq: lnricdef}
    \begin{aligned}
        \lt_0(q,S) &= \textit{0}\\
        \lt_n(q,S) &= (\prj{0_{t_1}} + \prj{1_{t_1}} \cdot \cS(\lt_{n-1}(q,S)) \cdot S) \bev(q,1),\\
    \end{aligned}
\end{equation}
where $\textit{0}$ is the zero operator in $\Dom$, $\cS$ produces the shift $t_n \rightarrow t_{n+1}$, and the controlled operation is represented by the operator defined in \autoref{prop: ctrl}.

It is easy to see that for $n>0$, the operator $\lt_n(q, S)$ is equivalent to the operator $\wt_n(q, S)$ as defined in \autoref{eq: wnricdef}; in fact, $\lt_n(q, S)$ is defined as the composition of the controlled-not gate, which evaluates the guard, and the operator corresponding to the composition of the semantics of the loop body $S$ and $\lt_{n-1}(q, S)$ shifted by $\cS$.

\begin{proposition}\label{prop: boundln}
    For all $n$, if $S$ is bounded then $\lt_n(q,S)$ is bounded. 
\end{proposition}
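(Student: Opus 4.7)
The natural plan is to proceed by induction on $n$, exploiting the submultiplicativity of the operator norm and the triangle inequality on the recursive definition in \autoref{eq: lnricdef}.

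For the base case, $\lt_0(q,S)$ is the zero operator, which has norm $0$ and is trivially bounded. For the inductive step, I would assume $\lt_{n-1}(q,S) \in \Bnd(\Dom)$ and aim to bound
\[
\|\lt_n(q,S)\| \;\le\; \bigl(\|\prj{0_{t_1}}\| + \|\prj{1_{t_1}}\|\cdot\|\cS(\lt_{n-1}(q,S))\|\cdot\|S\|\bigr)\cdot\|\bev(q,1)\|
\]
via the triangle inequality applied to the sum $\prj{0_{t_1}} + \prj{1_{t_1}}\cdot\cS(\lt_{n-1}(q,S))\cdot S$ and submultiplicativity for compositions. Each factor on the right-hand side should then be controlled: $\prj{0_{t_1}}$ and $\prj{1_{t_1}}$ are orthogonal projectors, hence of norm at most $1$; $\bev(q,1)$ is a $\CN$ gate, hence unitary and of norm $1$; $S$ is bounded by hypothesis; and $\cS(\lt_{n-1}(q,S))$ has the same norm as $\lt_{n-1}(q,S)$, which is finite by the inductive assumption. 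Combining these yields a finite upper bound of the form $1 + \|\lt_{n-1}(q,S)\|\cdot\|S\|$, establishing the claim.

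The only delicate point I anticipate is justifying $\|\cS(\lt_{n-1}(q,S))\| = \|\lt_{n-1}(q,S)\|$. Since $\cS$ simply relabels the ancilla index $t_n\mapsto t_{n+1}$, it corresponds to conjugation by a partial isometry (an index renaming on the tensor factors $\Hilb_T$) that is norm-preserving on bounded operators acting nontrivially only on finitely many ancillas. One can make this precise by observing that $\cS$ intertwines the operator with a unitary that swaps ancilla subsystems, and such intertwinings preserve the operator norm. Once this observation is in place, the induction closes cleanly and no further computation is needed.
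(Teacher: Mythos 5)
Your induction via the triangle inequality and submultiplicativity is a valid proof of the proposition as literally stated: each $\lt_n(q,S)$ is indeed a bounded operator, and your handling of $\cS$ as a norm-preserving relabelling of ancillas is the right justification for $\|\cS(\lt_{n-1}(q,S))\| = \|\lt_{n-1}(q,S)\|$. However, the paper argues differently, and the difference matters downstream. Instead of the triangle inequality, the paper fixes a unit vector $\ket{\psi}$, discards the unitary $\bev(q,1)$, and observes that $\prj{0_{t_1}}\ket{\psi}$ and $\prj{1_{t_1}}\cS(\lt_{n-1}(q,S))S\ket{\psi}$ lie in orthogonal subspaces (the ranges of $\prj{0_{t_1}}$ and $\prj{1_{t_1}}$), so the squared norms add by Pythagoras; combined with the fact that $\prj{1_{t_1}}$ commutes with $\cS(\lt_{n-1}(q,S))S$ (neither factor touches $t_1$) and the inductive bound $\|\cS(\lt_{n-1}(q,S))S\|\le 1$, this yields the \emph{uniform} estimate $\|\lt_n(q,S)\|\le 1$ for all $n$. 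Your recursion $\|\lt_n\|\le 1+\|\lt_{n-1}\|\cdot\|S\|$ only gives a bound growing with $n$ (already $\|\lt_n\|\le n$ when $\|S\|=1$), which suffices for "bounded for each fixed $n$" but not for what the paper actually extracts from this proof: \autoref{prop: lnseqbound} cites it to assert that the single sequence $\{\|\Lt_n(q,\lnSem{s})\|\}_n$ is bounded by $1$, which feeds into the Cauchy/convergence argument. So if you take your route, you would need a separate argument (essentially the orthogonality one) to recover the uniform bound; the triangle inequality alone is too lossy here because the two summands are not generic — they are supported on orthogonal eigenspaces of $t_1$.
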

\begin{proof}
    Let's consider $\lt_n(q,S)$ and a vector $\ket{\psi}$ such that $\| \psi \| = 1$:
    \begin{align*}
        \| (\prj{0_{t_1}} + \prj{1_{t_1}} \cdot \cS(\lt_{i-1}(q,S)) \cdot S)\bev(q,1)\ket{\psi} \|^2 = \\
        &\text{(since }\bev(q,1) \text{ is unitary)}\\
        = \| (\prj{0_{t_1}} + \prj{1_{t_1}} \cdot \cS(\lt_{i-1}(q,S)) \cdot S)\ket{\psi} \|^2 = \\ 
         = \| \prj{0_{t_1}}\ket{\psi} + \prj{1_{t_1}} \cdot \cS(\lt_{i-1}(q,S)) \cdot S\ket{\psi} \|^2 =\\ 
        \hspace{10em}
        &\text{($\prj{0_{t_1}}\!\!\ket{\psi}$ and $\prj{1_{t_1}}\!\cS(\lt_{i-1}(q,S)) S\!\ket{\psi}$ are orthogonal \cite{heinosaari2008guidemathematical})}\\
        = \| \prj{0_{t_1}}\ket{\psi} \|^2 + \| \prj{1_{t_1}} \cS(\lt_{i-1}(q,S)) S)\ket{\psi} \|^2 \le \\
        &(\cS(\lt_{i-1}(q,S)) S) \text{ is bounded)}\\
        \le \| \prj{0_{t_1}}\ket{\psi} \|^2 + \| \prj{1_{t_1}} \ket{\psi} \|^2 =\\
        = \| (\prj{0_{t_1}} + \prj{1_{t_1}})\ket{\psi} \|^2 =\| \ket{\psi} \|^2 = 1.
    \end{align*}
    Since $\|\lt_n(q,S) \ket{\psi} \|^2 \le 1$ also $\|\lt_n(q,S) \ket{\psi} \| \le 1$.
\end{proof}

\begin{restatable}{proposition}{proplnind}
The closed formula of \autoref{eq: lnricdef} is:
\begin{align}\label{eq: Lnclose}
    \Lt_n(q,S) = \sum_{k=1}^{n}(\prod_{i=1}^{k-1}(\prj{1_{t_i}}) \cdot \prj{0_{t_k}} \cdot \prod_{i=n-k}^{n-2}(\bev(q,n-i)\cdot S) \cdot \bev(q,1)).
\end{align}
\end{restatable}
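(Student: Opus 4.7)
The plan is to prove the identity by induction on $n$, exactly mirroring the proof of \propunind\ for the unitary case, and to highlight that the only structural difference from \eqref{eq: Wnclose} is the absence of the trailing term $\prod_{i=1}^{n}\prj{1_{t_i}}\cdot\prod_{i=0}^{n-1}S\bev(q,n-i)$, which arises precisely because $\lt_0(q,S)=0$ in the linear case while $\wt_0(q,S)=\Ident$ in the unitary case. So the induction effectively \emph{kills} the tail term at every step.

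For the base case $n=1$, I would unfold the recursive definition $\lt_1(q,S)=(\prj{0_{t_1}}+\prj{1_{t_1}}\cdot\cS(\lt_0(q,S))\cdot S)\,\bev(q,1)$ and use $\lt_0(q,S)=0$ to get $\lt_1(q,S)=\prj{0_{t_1}}\,\bev(q,1)$. The closed formula for $n=1$ has a single summand ($k=1$) in which both the product $\prod_{i=1}^{0}\prj{1_{t_i}}$ and the product $\prod_{i=0}^{-1}(\bev(q,1-i)\,S)$ are empty and therefore equal to the identity, leaving $\prj{0_{t_1}}\cdot\bev(q,1)$, matching the computation.

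For the inductive step, assume the closed form holds for $\lt_{n-1}(q,S)$. I would expand
\[
\lt_n(q,S) \;=\; \prj{0_{t_1}}\bev(q,1)\;+\;\prj{1_{t_1}}\cdot \cS(\lt_{n-1}(q,S))\cdot S\cdot \bev(q,1),
\]
and push $\cS$ through the inductive sum using $\cS(\prj{j_{t_i}})=\prj{j_{t_{i+1}}}$ and $\cS(\bev(q,m))=\bev(q,m+1)$, while $S$ is unaffected since by hypothesis it does not act on $t_1$. After the shift, multiplying on the left by $\prj{1_{t_1}}$ absorbs into the product $\prod_{i=1}^{k-1}\prj{1_{t_i}}$ (extending its range by one), the reindexing $k'=k+1$ turns the range $k=1,\dots,n-1$ into $k'=2,\dots,n$, and multiplying the shifted $\bev(q,2)$ with the extra trailing $S\cdot\bev(q,1)$ extends the product over $\bev(q,\cdot)\,S$ by its final factor. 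The $\prj{0_{t_1}}\bev(q,1)$ term supplies the missing $k'=1$ summand, with empty products equal to the identity. Combining the two pieces yields exactly \eqref{eq: Lnclose} for $n$.

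The main obstacle will be the bookkeeping of the two products (the $\prj{1_{t_i}}$ chain and the $\bev(q,\cdot)\,S$ chain) under the shift $\cS$ and the subsequent reindexing $k\mapsto k+1$, making sure the boundary indices ($i=n-k$ and $i=n-2$) line up correctly after the extra $S\cdot\bev(q,1)$ is appended on the right. Once this combinatorial matching is verified, no further analytical input is needed beyond linearity and the definition of $\cS$; in particular, boundedness plays no role here and is handled separately in \autoref{prop: boundln}.
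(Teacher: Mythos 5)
Your proposal is correct and follows essentially the same route as the paper's proof: induction on $n$, pushing the shift $\cS$ through the inductive closed form, absorbing $\prj{1_{t_1}}$ into the projector chain, appending $S\cdot\bev(q,1)$ on the right to extend the $\bev(q,\cdot)S$ product, and reindexing $k\mapsto k+1$ so that the $\prj{0_{t_1}}\bev(q,1)$ term supplies the $k=1$ summand. The only cosmetic difference is that the paper takes $n=0$ (where both sides are the zero operator) as the base case rather than $n=1$.
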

A proof by induction is given in \autoref{appx}.

Now, given the $\whle{q}{s}$ statement, we can define a chain of linear operator $\{\Lt_n(q,\lnSem{s})\}_n$ that represent the linear semantics of all finite executions of the while loop.
\begin{proposition}\label{prop: boundlnsem}
    For all $n$, $\Lt_n(q,\lnSem{s})$ is bounded.
\end{proposition}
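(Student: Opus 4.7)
The plan is to reduce this statement to Proposition~\ref{prop: boundln} via the closed-form identification $\Lt_n(q,S) = \lt_n(q,S)$ supplied by the preceding restatable proposition. Once we know that $\lnSem{s}$ is a bounded operator on $\Dom$, boundedness of $\Lt_n(q,\lnSem{s})$ is immediate by instantiating $S := \lnSem{s}$ in Proposition~\ref{prop: boundln}.

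First, I would verify that $\lnSem{s}$ is bounded for every $s$ in the fragment of $SL$ on which $\lnSem{\cdot}$ has been defined so far, namely the three clauses $\skp$, $U(\arrw{q})$, and sequential composition. A short induction on the structure of $s$ suffices: $\Ident_\Dom$ is bounded with $\|\Ident_\Dom\| = 1$; the extension $\arrw{U} = \Ident \otimes U \otimes \Ident$ is unitary on $\Dom$, hence bounded with norm $1$; and if $\lnSem{s_1}$ and $\lnSem{s_2}$ are bounded then so is $\lnSem{s_2}\cdot\lnSem{s_1}$, with $\|\lnSem{s_2}\cdot\lnSem{s_1}\|\le \|\lnSem{s_2}\|\cdot\|\lnSem{s_1}\|$. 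Second, I would invoke the equivalence of $\Lt_n$ and $\lt_n$ coming from the restatable proposition: the closed formula in \autoref{eq: Lnclose} denotes the same operator as the recursive definition in \autoref{eq: lnricdef} for every $n$ (with the convention that an empty sum gives $\Lt_0 = 0 = \lt_0$). Applying Proposition~\ref{prop: boundln} with $S = \lnSem{s}$ then yields the claim; in fact the same argument gives the quantitative bound $\|\Lt_n(q,\lnSem{s})\|\le 1$.

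I do not expect any real obstacle here: the nontrivial analytic content — orthogonality of the two projector branches $\prj{0_{t_1}}\ket{\psi}$ and $\prj{1_{t_1}}\cS(\lt_{n-1}(q,S))S\ket{\psi}$, together with unitarity of $\bev(q,1)$, yielding contractivity — has already been absorbed into the proof of Proposition~\ref{prop: boundln}. The present statement is just the assembly of two ingredients: boundedness of $\lnSem{s}$ by structural induction, and the closed-form identification $\Lt_n = \lt_n$. The uniform bound $\|\Lt_n(q,\lnSem{s})\|\le 1$ obtained along the way is the hypothesis of \autoref{th: strongconv} that will presumably be exploited later to pass to a strong limit of the chain $\{\Lt_n(q,\lnSem{s})\}_n$ and thereby define the linear semantics of the full $\whle{q}{s}$ construct.
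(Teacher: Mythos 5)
Your proposal matches the paper's own argument: the paper likewise proves this by structural induction on $s$ (establishing that $\lnSem{s}$ is bounded, since $\Ident_\Dom$ and unitary extensions are bounded and composition preserves boundedness) and then applies \autoref{prop: boundln} with $S=\lnSem{s}$, using the equivalence of the recursive $\lt_n$ and the closed form $\Lt_n$. Your write-up is simply a more explicit version of the same two-ingredient reduction, including the useful quantitative bound $\|\Lt_n(q,\lnSem{s})\|\le 1$.
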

\begin{proof}
    The proof follows by means of a structural induction on $s$ and \autoref{prop: boundln}
\end{proof}
To define the semantics of the whole construct, we now must include the case of an infinite loop, which, as we will show later, is captured by the limit of the sequence $\{\Lt_n(q,\lnSem{s})\}_n$.

From \autoref{th: strongconv}, we know that if we prove that the sequence $\{\|\Lt_n(q, \lnSem{s})\|\}$ is bounded and that, for all $\ket{\psi} \in \Dom$, the sequence $\{\Lt_n(q, \lnSem{s}) \ket{\psi}\}$ is a Cauchy sequence in $\Dom$, then we can conclude that the sequence $\{\Lt_n(q,\lnSem{s})\}_n$ has a limit.
So, let's first prove that $\{\|\Lt_n(q,\lnSem{s})\|\}_n$ is bounded.

\begin{proposition}\label{prop: lnseqbound}
    The sequence $\{\|\Lt_n(q, \lnSem{s})\|\}$ is bounded.
\end{proposition}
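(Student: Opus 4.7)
The plan is to establish the stronger uniform bound $\|\Lt_n(q,\lnSem{s})\| \le 1$ for every $n$, which immediately yields boundedness of $\{\|\Lt_n(q,\lnSem{s})\|\}_n$. Two ingredients are required: a strengthening of \autoref{prop: boundln} asserting that if $\|S\| \le 1$ then $\|\lt_n(q,S)\| \le 1$ for every $n$; and contractivity of the body semantics, i.e.\ $\|\lnSem{s}\| \le 1$.

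For the first ingredient I would revisit the calculation in the proof of \autoref{prop: boundln} and make it quantitative by an inner induction on $n$. That proof already shows, via the orthogonality of $\prj{0_{t_1}}\ket{\psi}$ and $\prj{1_{t_1}}\cS(\lt_{n-1}(q,S))S\ket{\psi}$, that
$$
\|\lt_n(q,S)\ket{\psi}\|^2 = \|\prj{0_{t_1}}\ket{\psi}\|^2 + \|\prj{1_{t_1}}\cS(\lt_{n-1}(q,S))S\ket{\psi}\|^2.
$$
Since $S$ is required not to act on $t_1$ and $\cS$ shifts every ancillary index away from $t_1$, both $S$ and $\cS(\lt_{n-1}(q,S))$ commute with $\prj{1_{t_1}}$, so the second summand equals $\|\cS(\lt_{n-1}(q,S))\,S\,\prj{1_{t_1}}\ket{\psi}\|^2$. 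Using $\|S\|\le 1$ and the fact that $\cS$ is a relabelling of ancillary qubits (hence norm-preserving), the inductive hypothesis bounds this by $\|\prj{1_{t_1}}\ket{\psi}\|^2$, and the two summands recombine to $\|\ket{\psi}\|^2$.

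For the second ingredient I would perform a structural induction on $s$. The atomic cases $\skp$ and $U(\arrw{q})$ give unitaries of norm $1$; sequential composition is handled by submultiplicativity of the operator norm, $\|\lnSem{s_2}\lnSem{s_1}\| \le \|\lnSem{s_2}\|\cdot\|\lnSem{s_1}\| \le 1$. Combining the two ingredients then yields $\|\Lt_n(q,\lnSem{s})\| \le 1$ for every $n$, proving the proposition with the uniform bound $1$.

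The main obstacle is the nested-while case of the structural induction: to conclude $\|\lnSem{\whle{q'}{s'}}\| \le 1$ one needs the limit defining this semantics to exist and to inherit contractivity from its approximants, which is precisely what the current section is in the process of establishing. The clean resolution is a joint induction on the while-nesting depth of $s$: at each depth, the present proposition together with the forthcoming Cauchy property and limit construction supplies contractivity for all strictly shallower nested whiles, feeding the induction upward without circularity.
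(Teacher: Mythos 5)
Your proof is correct and follows essentially the same route as the paper: the paper also derives boundedness from the uniform bound $\|\Lt_n(q,\lnSem{s})\|\le 1$, which it obtains from \autoref{prop: boundlnsem}, itself resting on the orthogonality computation of \autoref{prop: boundln} and a structural induction on $s$. Your version is, however, tighter on two points that the paper glosses over. First, in \autoref{prop: boundln} the paper justifies the step $\|\prj{1_{t_1}}\cS(\lt_{n-1}(q,S))S\ket{\psi}\|^2\le\|\prj{1_{t_1}}\ket{\psi}\|^2$ merely by remarking that $\cS(\lt_{n-1}(q,S))S$ is \emph{bounded}; as you observe, boundedness alone does not yield this inequality --- one needs contractivity ($\|S\|\le 1$ together with the inductive hypothesis $\|\lt_{n-1}(q,S)\|\le 1$) and the fact that these operators act trivially on $t_1$ and hence commute with $\prj{1_{t_1}}$, which is exactly the argument you supply. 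Second, you correctly flag the circularity lurking in the structural induction of \autoref{prop: boundlnsem}: when $s$ contains a nested \texttt{while}, $\lnSem{s}$ is itself defined as a limit whose existence depends on the very chain of results the present proposition belongs to; your joint induction on while-nesting depth is the appropriate way to break that cycle, and the paper leaves it implicit.
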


\begin{proof}
    From \autoref{prop: boundlnsem}, we know that for all $n$, $\|\Lt_n(q, \lnSem{s})\| \le 1$, thus the sequence $\{\|\Lt_n(q, \lnSem{s})\|\}$ is bounded above~\cite{mattuck2013introduction}.
\end{proof}

We will now show that, for all $\ket{\psi} \in \Dom$, the sequence $\{\Lt_n(q, \lnSem{s}) \ket{\psi}\}$ is Cauchy. To this purpose, we state some support lemmas.

\begin{lemma}\label{prop: diffVect}
    For all $n > 0$, $\Lt_n(q, \lnSem{s}) = \sum_{i=1}^n \left( \Lt_i(q, \lnSem{s}) - \Lt_{i-1}(q, \lnSem{s}) \right)$.
\end{lemma}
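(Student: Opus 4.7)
The plan is to observe that the identity is a straightforward telescoping sum, exploiting only the base case of the recursive definition in \autoref{eq: lnricdef}, namely $\Lt_0(q, \lnSem{s}) = \textit{0}$.

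First, I would invoke the standard telescoping identity: for any sequence $\{A_i\}$ of operators,
\[
\sum_{i=1}^{n} (A_i - A_{i-1}) = A_n - A_0,
\]
which follows by induction on $n$ (or by inspection, since consecutive intermediate terms cancel pairwise). This identity is purely algebraic and holds in any abelian group, in particular in the vector space $\Bnd(\Dom)$ of bounded linear operators.

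Next, I would apply this identity with $A_i = \Lt_i(q, \lnSem{s})$. This yields
\[
\sum_{i=1}^{n} \bigl( \Lt_i(q, \lnSem{s}) - \Lt_{i-1}(q, \lnSem{s}) \bigr) = \Lt_n(q, \lnSem{s}) - \Lt_0(q, \lnSem{s}).
\]
Finally, by the base case of the recursive definition in \autoref{eq: lnricdef}, we have $\Lt_0(q, \lnSem{s}) = \textit{0}$, so the right-hand side equals $\Lt_n(q, \lnSem{s})$, proving the claim.

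There is really no obstacle here; the statement is a reformulation of $\Lt_n$ as a telescoping series of its own partial differences, and its entire content rests on the fact that the recursion starts from the zero operator. The reason for stating it as a separate lemma is almost certainly strategic: the subsequent Cauchy-sequence argument will likely need to bound $\|\Lt_m \ket{\psi} - \Lt_n \ket{\psi}\|$ by a tail $\sum_{i=n+1}^{m} \|(\Lt_i - \Lt_{i-1})\ket{\psi}\|$, and this lemma provides the telescoping decomposition on which that estimate will be built.
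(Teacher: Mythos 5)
Your proof is correct and follows essentially the same route as the paper's: both reduce the claim to the generic telescoping identity $\sum_{i=1}^{n}(A_i - A_{i-1}) = A_n - A_0$ (proved by induction or pairwise cancellation) and then use the base case $\Lt_0(q,\lnSem{s}) = 0$ from \autoref{eq: lnricdef}. Nothing is missing.
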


\begin{proof}
    In general, given a sequence $A_n$, by induction on $n$ and by straightforward arithmetic simplifications, we can prove that $A_n = \sum_{i=1}^n (A_i - A_{i-1}) + A_0$.  
    In our case, $\Lt_0(q, \lnSem{s}) = 0$, thus $\Lt_n(q, \lnSem{s}) = \sum_{i=1}^n \left( \Lt_i(q, \lnSem{s}) - \Lt_{i-1}(q, \lnSem{s}) \right)$.
\end{proof}

\begin{lemma}\label{prop: orthoDiff}
    For every $n \neq m$, $\Lt_n(q, \lnSem{s}) - \Lt_{n-1}(q, \lnSem{s})$ and $\Lt_m(q, \lnSem{s}) - \Lt_{m-1}(q, \lnSem{s})$ are orthogonal.
\end{lemma}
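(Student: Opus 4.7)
The plan is to exploit the closed form in \autoref{eq: Lnclose}: after a simple reindexing, the $k$-th summand of $\Lt_n(q,\lnSem{s})$ turns out not to depend on $n$ as long as $k<n$. Hence the difference $\Lt_n-\Lt_{n-1}$ telescopes down to a single summand whose range is contained in an explicitly identifiable subspace of $\Dom$, and these subspaces are pairwise orthogonal as $n$ varies.

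First, I would write $\Lt_n(q,\lnSem{s})=\sum_{k=1}^{n} T_k^{(n)}$ with
$$T_k^{(n)} = \prod_{i=1}^{k-1}\prj{1_{t_i}}\cdot\prj{0_{t_k}}\cdot\prod_{i=n-k}^{n-2}\bigl(\bev(q,n-i)\cdot\lnSem{s}\bigr)\cdot\bev(q,1),$$
and perform the substitution $j=n-i$ in the inner product, turning it into $\prod_{j=2}^{k}\bigl(\bev(q,j)\cdot\lnSem{s}\bigr)$. Since this no longer mentions $n$, one gets $T_k^{(n)}=T_k^{(n-1)}$ for every $k\leq n-1$ (with the convention that for $k=1$ the product is empty and equals $\Ident$). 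Cancelling the matching summands yields
$$\Lt_n(q,\lnSem{s})-\Lt_{n-1}(q,\lnSem{s})=T_n^{(n)}=\Bigl(\prod_{i=1}^{n-1}\prj{1_{t_i}}\Bigr)\prj{0_{t_n}}\cdot\prod_{j=2}^{n}\bigl(\bev(q,j)\lnSem{s}\bigr)\cdot\bev(q,1).$$

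Next, I would observe that the leftmost factor is the orthogonal projector onto the subspace
$$V_n=\bigl\{\ket{\psi}\in\Dom : t_1=\cdots=t_{n-1}=\ket{1},\; t_n=\ket{0}\bigr\},$$
since it is a product of commuting one-qubit projectors on pairwise distinct ancillas. Therefore the range of $\Lt_n(q,\lnSem{s})-\Lt_{n-1}(q,\lnSem{s})$ lies in $V_n$.

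Finally, for $n\neq m$, assume without loss of generality $n<m$. Any vector in $V_m$ satisfies $t_n=\ket{1}$ because $n\leq m-1$, while any vector in $V_n$ satisfies $t_n=\ket{0}$; hence $V_n\perp V_m$ inside $\Dom$. Consequently $\bigl\langle(\Lt_n-\Lt_{n-1})\ket{\psi},\,(\Lt_m-\Lt_{m-1})\ket{\phi}\bigr\rangle=0$ for every $\ket{\psi},\ket{\phi}\in\Dom$, which is the desired orthogonality. The only delicate step is the reindexing in the first paragraph: getting the index bounds right (including the empty-product edge case at $k=1$ and the orientation of the product) is purely mechanical but has to be tracked carefully; once that is settled, the rest follows directly from the projector on the left of $T_n^{(n)}$.
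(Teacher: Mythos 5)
Your proposal is correct and follows essentially the same route as the paper: compute the difference $\Lt_n-\Lt_{n-1}$ from the closed form \autoref{eq: Lnclose} (the terms with $k<n$ cancel because they do not depend on $n$), and then observe that the surviving term is left-multiplied by the projector $\prod_{i=1}^{n-1}\prj{1_{t_i}}\cdot\prj{0_{t_n}}$, whose ranges for distinct $n$ are mutually orthogonal. Your write-up is in fact somewhat more careful than the paper's, since you make the telescoping cancellation and the range-containment argument explicit rather than asserting the difference formula directly.
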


\begin{proof}
    From \autoref{eq: Lnclose}, we can compute:
    \begin{equation*}
        \begin{gathered}
        \Lt_n(q, \lnSem{s}) - \Lt_{n-1}(q, \lnSem{s}) = \prod_{i=1}^{n-1} \left( \prj{1_{t_i}} \right) \cdot \prj{0_{t_n}} \cdot \prod_{i=0}^{n-2} \left( \bev(q, n - i) \cdot S \right) \cdot \bev(q, 1), \\
        \Lt_m(q, \lnSem{s}) - \Lt_{m-1}(q, \lnSem{s}) = \prod_{i=1}^{m-1} \left( \prj{1_{t_i}} \right) \cdot \prj{0_{t_m}} \cdot \prod_{i=0}^{m-2} \left( \bev(q, n - i) \cdot S \right) \cdot \bev(q, 1).
        \end{gathered}
    \end{equation*}
    Clearly, if $n \neq m$, $\prod_{i=1}^{n-1} \left( \prj{1_{t_i}} \right) \prj{0_{t_n}}$ and $\prod_{i=1}^{m-1} \left( \prj{1_{t_i}} \right) \prj{0_{t_m}}$ are orthogonal, and thus $\Lt_n(q, \lnSem{s}) - \Lt_{n-1}(q, \lnSem{s})$ and $\Lt_m(q, \lnSem{s}) - \Lt_{m-1}(q, \lnSem{s})$ are orthogonal.
\end{proof}

\begin{lemma}\label{prop: increasing}
    For every $n, m$, if $n \leq m$, then $\|\Lt_n(q, \lnSem{s}) \ket{\psi}\| \leq \|\Lt_m(q, \lnSem{s}) \ket{\psi}\|$.
\end{lemma}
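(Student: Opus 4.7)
The plan is to combine the two preceding lemmas (\autoref{prop: diffVect} and \autoref{prop: orthoDiff}) with the Pythagorean theorem for orthogonal sums in a Hilbert space. The key observation is that, since the telescoping differences $\Lt_i(q, \lnSem{s}) - \Lt_{i-1}(q, \lnSem{s})$ are pairwise orthogonal as operators (by \autoref{prop: orthoDiff}), then for any fixed $\ket{\psi} \in \Dom$, the vectors $\bigl(\Lt_i(q, \lnSem{s}) - \Lt_{i-1}(q, \lnSem{s})\bigr)\ket{\psi}$ are pairwise orthogonal vectors in $\Dom$.

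First, I would apply \autoref{prop: diffVect} to both $\Lt_n$ and $\Lt_m$, obtaining the telescoping decompositions
\[
\Lt_n(q, \lnSem{s})\ket{\psi} = \sum_{i=1}^n \bigl(\Lt_i(q, \lnSem{s}) - \Lt_{i-1}(q, \lnSem{s})\bigr)\ket{\psi},
\]
and analogously for $\Lt_m$ with the sum running up to $m$. Next, I would invoke orthogonality from \autoref{prop: orthoDiff} together with Pythagoras to get
\[
\|\Lt_n(q, \lnSem{s})\ket{\psi}\|^2 = \sum_{i=1}^n \|\bigl(\Lt_i(q, \lnSem{s}) - \Lt_{i-1}(q, \lnSem{s})\bigr)\ket{\psi}\|^2,
\]
and the corresponding identity for $\Lt_m$.

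Since $n \leq m$, the sum for $\Lt_m$ contains every term in the sum for $\Lt_n$ plus additional non-negative terms $\|(\Lt_i - \Lt_{i-1})\ket{\psi}\|^2$ for $i = n+1, \dots, m$. Hence $\|\Lt_n(q, \lnSem{s})\ket{\psi}\|^2 \leq \|\Lt_m(q, \lnSem{s})\ket{\psi}\|^2$, and taking square roots yields the claim. I do not anticipate a serious obstacle here: the main lemmas have already been established, and the argument is a routine application of the Pythagorean identity. The only minor care point is justifying that operator-level orthogonality (disjoint projector prefixes) really does produce orthogonal image vectors for every $\ket{\psi}$, which is immediate since the leading factors $\prod_{i=1}^{k-1}\prj{1_{t_i}}\cdot\prj{0_{t_k}}$ project onto mutually orthogonal subspaces of $\Dom$ for distinct $k$.
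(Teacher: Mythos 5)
Your proof is correct and non-circular: \autoref{prop: diffVect} and \autoref{prop: orthoDiff} precede this lemma in the paper, so you may use them, and the step from operator-level orthogonality to orthogonality of the image vectors $(\Lt_i(q,\lnSem{s}) - \Lt_{i-1}(q,\lnSem{s}))\ket{\psi}$ is exactly as you say, since the prefixes $\prod_{i=1}^{k-1}\prj{1_{t_i}}\cdot\prj{0_{t_k}}$ project onto mutually orthogonal subspaces for distinct $k$. The paper proves the lemma by a slightly different (and more informal) route: it works directly with the closed formula \autoref{eq: Lnclose}, observes that the sum for $\Lt_m$ contains all terms of $\Lt_n$ plus additional ones, and then compares coefficients in the standard basis, concluding that $\Lt_n(q,\lnSem{s})\ket{\psi}$ is a ``substate'' of $\Lt_m(q,\lnSem{s})\ket{\psi}$ (every nonzero coefficient of the former reappears unchanged in the latter), whence the norm inequality. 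The underlying insight is the same in both arguments — the extra terms contribute orthogonally, so they can only increase the norm — but your version is formally tighter: it replaces the coefficient-matching claim by the explicit Pythagorean identity $\|\Lt_n(q,\lnSem{s})\ket{\psi}\|^2 = \sum_{i=1}^{n}\|(\Lt_i(q,\lnSem{s}) - \Lt_{i-1}(q,\lnSem{s}))\ket{\psi}\|^2$, which is precisely the identity the paper itself invokes later in the Cauchy-sequence theorem; the paper's version, on the other hand, makes more visible that the difference $\Lt_m - \Lt_n$ consists exactly of the closed-formula terms $k = n+1,\dots,m$. Either way the lemma stands; your derivation would even let the paper streamline the subsequent theorem, since the monotonicity of the partial sums there is immediate from your decomposition.
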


\begin{proof}
    From \autoref{eq: Lnclose}, we observe that if $n \leq m$, the summation corresponding to $\Lt_m(q, \lnSem{s})$ contains all the terms of $\Lt_n(q, \lnSem{s})$ plus additional terms.  
    Recall that each term in the summation consists of projectors that are mutually orthogonal, and the sum of all projectors is different from the identity.  
    Thus, if $\Lt_n(q, \lnSem{s}) \ket{\psi} = \sum \alpha_i \ket{e_i}$ and $\Lt_m(q, \lnSem{s}) \ket{\psi} = \sum \beta_i \ket{e_i}$, where $\ket{e_i}$ is a standard basis of $\Dom$, then $\forall i, \, \alpha_i \neq 0 \Rightarrow \beta_i = \alpha_i$.
    In other words, $\Lt_n(q, \lnSem{s}) \ket{\psi}$ is a substate of $\Lt_m(q, \lnSem{s}) \ket{\psi}$ since $\Lt_m(q, \lnSem{s})$ projects the state on `more basis vectors' that $\Lt_n(q, \lnSem{s})$.
\end{proof}

\begin{theorem}
    For all vectors $\ket{\psi} \in \Dom$, the sequence $\{\Lt_n(q, \lnSem{s}) \ket{\psi}\}_n$ is Cauchy.
\end{theorem}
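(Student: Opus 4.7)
The plan is to fix $\ket{\psi} \in \Dom$ and reduce the Cauchy question to the convergence of a series of nonnegative reals, using the three preceding lemmas in tandem: \autoref{prop: diffVect} gives a telescoping decomposition, \autoref{prop: orthoDiff} makes the summands pairwise orthogonal, and \autoref{prop: increasing} together with \autoref{prop: boundlnsem} controls the resulting partial sums.

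Concretely, I would first take $m > n$ without loss of generality and apply \autoref{prop: diffVect} to both $\Lt_m(q,\lnSem{s})$ and $\Lt_n(q,\lnSem{s})$, then subtract. This yields
\begin{equation*}
\Lt_m(q,\lnSem{s})\ket{\psi} - \Lt_n(q,\lnSem{s})\ket{\psi} = \sum_{i=n+1}^{m} \bigl(\Lt_i(q,\lnSem{s}) - \Lt_{i-1}(q,\lnSem{s})\bigr)\ket{\psi}.
\end{equation*}
Since by \autoref{prop: orthoDiff} the vectors $(\Lt_i(q,\lnSem{s}) - \Lt_{i-1}(q,\lnSem{s}))\ket{\psi}$ are pairwise orthogonal across distinct indices $i$, the Pythagorean identity in the Hilbert space $\Dom$ gives
\begin{equation*}
\|\Lt_m(q,\lnSem{s})\ket{\psi} - \Lt_n(q,\lnSem{s})\ket{\psi}\|^2 = \sum_{i=n+1}^{m} \|(\Lt_i(q,\lnSem{s}) - \Lt_{i-1}(q,\lnSem{s}))\ket{\psi}\|^2.
\end{equation*}

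To conclude, it suffices to show that the full nonnegative series $\sum_{i=1}^{\infty} \|(\Lt_i(q,\lnSem{s}) - \Lt_{i-1}(q,\lnSem{s}))\ket{\psi}\|^2$ converges, since then its tails vanish and the Cauchy property follows immediately from the previous display. Applying Pythagoras once more to the decomposition of \autoref{prop: diffVect}, the partial sum up to index $n$ is exactly $\|\Lt_n(q,\lnSem{s})\ket{\psi}\|^2$. By \autoref{prop: increasing} this sequence of partial sums is monotone nondecreasing, and by the operator bound $\|\Lt_n(q,\lnSem{s})\| \leq 1$ established in the proof of \autoref{prop: boundlnsem} it is bounded above by $\|\ket{\psi}\|^2$; hence the monotone convergence theorem on $\mathbb{R}$ ensures convergence, and the argument is complete. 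I do not anticipate a real obstacle here, since the three preparatory lemmas were clearly designed to fit together in exactly this way; the only subtle point is to confirm that the operator bound $\|\Lt_n(q,\lnSem{s})\| \leq 1$ extends from unit vectors to arbitrary $\ket{\psi}$ by homogeneity, which is immediate.
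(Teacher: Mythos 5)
Your proposal is correct and follows essentially the same route as the paper: telescoping via \autoref{prop: diffVect}, orthogonality and the Pythagorean identity via \autoref{prop: orthoDiff}, and monotone, bounded partial sums via \autoref{prop: increasing} and \autoref{prop: boundlnsem}. If anything, your final step is slightly tighter: you deduce the Cauchy property from the vanishing tails $\sum_{i=n+1}^{m}\|(\Lt_i(q,\lnSem{s})-\Lt_{i-1}(q,\lnSem{s}))\ket{\psi}\|^2$ of the convergent series, whereas the paper infers it from the consecutive differences tending to zero, which as literally stated is a weaker inference.
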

\begin{proof}
    By \autoref{prop: diffVect}, $\Lt_n(q,\lnSem{s})\ket{\psi} = \Sigma_{i=1}^n (\Lt_i(q,\lnSem{s}) - \Lt_{i-1}(q,\lnSem{s}))\ket{\psi}$.
    By \autoref{prop: orthoDiff} and \cite[Exercise 2]{heinosaari2008guidemathematical} it holds that:
    $$
    \|\Lt_n(q,\lnSem{s})\ket{\psi}\|^2 = \Sigma_{i=1}^n \|( \Lt_i(q,\lnSem{s}) - \Lt_{i-1}(q,\lnSem{s}))\ket{\psi} \|^2.
    $$
    Let us consider the sequence of partial sums $\{\Sigma_{j=1}^n \|( \Lt_i(q,\lnSem{s}) - \Lt_{i-1}(q,\lnSem{s}))\ket{\psi} \|^2 \}_n$.  
    By \autoref{prop: boundlnsem} and \autoref{prop: increasing}, we know that all partial sums are bounded and that the sequence of partial sums is increasing.   
    Since the sequence of partial sums is both increasing and bounded above, it must converge~\cite{mattuck2013introduction}.
    This implies that the infinite sum $\Sigma_{j=1}^{\infty} \|( \Lt_i(q,\lnSem{s}) - \Lt_{i-1}(q,\lnSem{s}))\ket{\psi} \|^2$ also converges~\cite{spivak08}.
    We have that $\Sigma_{i=1}^{\infty} \| ( \Lt_i(q, \lnSem{s}) - \Lt_{i-1}(q, \lnSem{s}) ) \ket{\psi} \|^2$  
    converges, and for all $i$,  
    $ \| ( \Lt_i(q, \lnSem{s}) - \Lt_{i-1}(q, \lnSem{s}) ) \ket{\psi} \|^2 > 0$.  
    Thus, $\lim_{i \rightarrow \infty} \left( \| ( \Lt_i(q, \lnSem{s}) - \Lt_{i-1}(q, \lnSem{s}) ) \ket{\psi} \|^2 \right) = 0$, which implies that $\lim_{i \rightarrow \infty} \left( \| \Lt_i(q, \lnSem{s}) \ket{\psi} - \Lt_{i-1}(q, \lnSem{s}) \ket{\psi} \| \right) = 0$.  
    This ensures that the sequence $\{ \Lt_i(q, \lnSem{s}) \ket{\psi} \}_i$ is a Cauchy sequence in $\Dom$.
\end{proof}

Having shown that, for all $\ket{\psi} \in \Dom$, 
$\{\Lt_n(q,\lnSem{s} \ket{\psi}\}_n$ is a Cauchy sequence,
by \autoref{th: strongconv} we can conclude that the sequence $\{\Lt_n(q,\lnSem{s})\}$ has a limit.
Thus, we use that limit to define the semantics of the quantum loop:
$$
\lnSem{\whle{q}{s}} = \lim_{n \to \infty} \Lt_n(q, \lnSem{s}).
$$
Moreover, the following proposition shows that this limit corresponds to a well-defined linear operator on $\Dom$.
\begin{proposition}
    Let $\{\ket{e_i}\}_i$ be the set of standard basis of $\Dom$, and let $\cL(q,\lnSem{s})$ be the linear operator defined as $\cL(q,\lnSem{s}) \ket{e_i} = \lim_{n \to \infty}\Lt_n(q,\lnSem{s}) \ket{e_i}$, $\forall \ket{e_i}$. Then 
    \[
    \cL(q,\lnSem{s})= \lim_{n \to \infty}\{\Lt_n(q,\lnSem{s})\}_n
    \].
\end{proposition}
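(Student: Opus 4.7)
The plan is to identify $\cL(q,\lnSem{s})$ with the strong limit of the sequence $\{\Lt_n(q,\lnSem{s})\}_n$, whose existence has already been established. The earlier theorem showed that $\{\Lt_n(q,\lnSem{s})\ket{\psi}\}_n$ is Cauchy for every $\ket{\psi}\in\Dom$ and that $\{\|\Lt_n(q,\lnSem{s})\|\}_n$ is bounded; by \autoref{th: strongconv} there exists a bounded linear operator $T$ on $\Dom$ such that $\Lt_n(q,\lnSem{s})\to T$ strongly. The task reduces to showing $T=\cL(q,\lnSem{s})$.

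First, I would apply strong convergence to the basis vectors: for every $\ket{e_i}$ in the standard basis, $\|T\ket{e_i}-\Lt_n(q,\lnSem{s})\ket{e_i}\|\to 0$, so $T\ket{e_i}=\lim_{n\to\infty}\Lt_n(q,\lnSem{s})\ket{e_i}=\cL(q,\lnSem{s})\ket{e_i}$. Thus $T$ and $\cL(q,\lnSem{s})$ agree on every basis vector, and by linearity they coincide on the dense subspace of finite linear combinations of $\{\ket{e_i}\}_i$.

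Next, I would extend this equality to the entire space. Given any $\ket{\psi}\in\Dom$, write $\ket{\psi}=\lim_{k\to\infty}\ket{\psi_k}$ with each $\ket{\psi_k}$ a finite linear combination of basis vectors. Since $T$ is bounded, it is continuous, so $T\ket{\psi}=\lim_{k}T\ket{\psi_k}=\lim_{k}\cL(q,\lnSem{s})\ket{\psi_k}$. This forces $\cL(q,\lnSem{s})$ to be well-defined and bounded on $\Dom$ (its value on $\ket{\psi}$ is independent of the approximating sequence and equals $T\ket{\psi}$), and gives $\cL(q,\lnSem{s})=T=\lim_{n\to\infty}\Lt_n(q,\lnSem{s})$.

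The main obstacle I anticipate is the subtlety that $\cL(q,\lnSem{s})$ is only specified on the basis in the statement, so before concluding equality one must verify that this prescription actually defines a bounded linear operator on the whole infinite-dimensional $\Dom$ rather than merely on the algebraic span. Routing the argument through the already-established strong limit $T$ bypasses this issue: boundedness and density immediately give the unique continuous linear extension, which is $T$ itself, so the identification $\cL(q,\lnSem{s})=T$ is automatic.
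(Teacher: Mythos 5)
Your proposal is correct and reaches the conclusion by essentially the same idea as the paper: convergence on the standard basis vectors, combined with the fact that their span is dense, yields strong convergence of $\{\Lt_n(q,\lnSem{s})\}_n$ to $\cL(q,\lnSem{s})$. The difference is one of rigor rather than substance: the paper simply asserts that convergence on each $\ket{e_i}$ ``therefore'' gives $\|\Lt_n(q,\lnSem{s})\ket{\psi}-\cL(q,\lnSem{s})\ket{\psi}\|\to 0$ for every $\ket{\psi}\in\Dom$, silently relying on the uniform bound $\|\Lt_n(q,\lnSem{s})\|\le 1$ and a density argument, and it never addresses whether the basis-wise prescription of $\cL(q,\lnSem{s})$ extends to a bounded operator on all of $\Dom$. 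You make both points explicit by first invoking the already-established strong limit $T$ (via \autoref{th: strongconv}), checking $T\ket{e_i}=\cL(q,\lnSem{s})\ket{e_i}$, and then using boundedness and continuity of $T$ together with density of the span to identify $\cL(q,\lnSem{s})$ with $T$; this closes the step the paper glosses over and simultaneously settles the well-definedness of $\cL(q,\lnSem{s})$ beyond the algebraic span. In short, your route buys a complete argument at the cost of a slightly longer detour through $T$, while the paper's version is a terse sketch of the same mechanism.
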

\begin{proof}
    By definition, for all basis $\ket{e_i}$ of $\Dom$, $\cL(q,\lnSem{s}) \ket{e_i} = \lim_{n \to \infty} \Lt_n(q,\lnSem{s}) \ket{e_i}$, thus $\|\Lt_n(q,\lnSem{s}) \ket{e_i} - \cL(q,\lnSem{s}) \ket{e_i}\| \to 0$ as $n \to \infty$.
    Therefore, for all $\ket{\psi} \in \Dom$, $\|\Lt_n(q,\lnSem{s}) \ket{\psi} - \cL(q,\lnSem{s}) \ket{\psi}\| \to 0$ as $n \to \infty$.    
\end{proof}

\section{Relation between Unitary and Linear Semantics}
So far, we have introduced two semantics:
\begin{description}
    \item[-] The unitary semantics models exactly the behavior of quantum computation, and
    \item[-] the linear semantics allows us to define a fixpoint and define the semantics of the infinite loops.
\end{description}

In this section, we investigate the relationship between these two semantics.

Let us consider the examples introduced in \autoref{sec: unsem ex}.
We can construct the semantics of the program $\whle{q}{X(q)}$, starting from the state $\ket{\psi} = \sfrac{1}{\sqrt{2}}\ket{0\dots}_{T}(\ket{0}_q+\ket{1}_q)$, as follows:
\begin{equation}\label{eq: esLni1}
    \begin{aligned}
    \Lt_0(q,X_q)\ket{\psi} &= 0\\
    \Lt_1(q,X_q)\ket{\psi} &= \sfrac{1}{\sqrt{2}}\ket{0\dots}_{T}\ket{0}_q\\
    \Lt_2(q,X_q)\ket{\psi} &= \sfrac{1}{\sqrt{2}}(\ket{0\dots}_{T}\ket{0}_q + \ket{10\dots}_{T}\ket{0}_q)\\
    \Lt_3(q,X_q)\ket{\psi} &= \sfrac{1}{\sqrt{2}}(\ket{0\dots}_{T}\ket{0}_q + \ket{10\dots}_{T}\ket{0}_q)
    \end{aligned}
\end{equation}    
We see that we have 'collected' the sub-state that corresponds to the terminating execution, and since the program is fully terminating, the fixpoint is a state with a norm equal to $1$.
Comparing the examples in the previous sections (\autoref{eq: esUni1}), it is easy to see that when the loop terminates, the two semantics coincide, specifically $\Lt_{n+1} = \Wt_n$. 
In this case, the linear semantics is `behind' the unitary semantics because the latter also includes the component that corresponds to executions that `keep going.'

Now consider the program $\whle{q}{H(q)}$ and the construction of its semantics:
\begin{equation}\label{eq: esLni2}
    \begin{aligned}
        \Lt_0(q,H_q)\ket{\psi} &= 0\\
        \Lt_1(q,H_q)\ket{\psi} &= \sfrac{1}{\sqrt{2}}\ket{0\dots}_{T}\ket{0}_q\\
        \Lt_2(q,H_q)\ket{\psi} &= \sfrac{1}{\sqrt{2}}\ket{0\dots}_{T}\ket{0}_q + \sfrac{1}{2}\ket{10\dots}_{T}\ket{0}_q\\
        \Lt_3(q,H_q)\ket{\psi} &= \sfrac{1}{\sqrt{2}}\ket{0\dots}_{T}\ket{0}_q + \sfrac{1}{2}\ket{10\dots}_{T}\ket{0}_q + \sfrac{1}{\sqrt{8}} \ket{110\dots}_{T}\ket{0}_q)\\
        \Lt_n(q,H_q)\ket{\psi} &= \sum_{i=1}^{n}\sfrac{1}{\sqrt{2^n}} \ket{1^{\otimes i}0\dots}_{T}\ket{0}_q.
    \end{aligned}    
\end{equation}
Here, we see that in each approximation $\Lt_n$, we obtain a state with a norm less than 1, which corresponds to the part of the execution that terminates in at most $n-1$ iterations.
If we compare the linear semantics in \autoref{eq: esLni2} with the corresponding unitary semantics in \autoref{eq: esUni2}, we observe that the linear semantics has a `missing' part in the output states—specifically, the portion of the state representing the execution that `keeps going'.

Finally, consider the while loop $\whle{q}{skip}$, evaluated on $\ket{0\dots}_{T}\ket{1}_q$:
\begin{equation}\label{eq: esLni3}
    \begin{aligned}
    \Lt_0(q,\Ident)\ket{0\dots}_{T}\ket{1}_q &= 0\\
    \Lt_1(q,\Ident)\ket{0\dots}_{T}\ket{1}_q &= \prj{0_{t_1}}\ket{10\dots}_{T}\ket{1}_q = 0\\
    \Lt_2(q,\Ident)\ket{0\dots}_{T}\ket{1}_q &= \prj{0_{t_1}}\ket{10\dots}_{T}\ket{1}_q + \prj{1_{t_1}}\prj{0_{t_2}}\ket{110\dots}_{T}\ket{1}_q = 0.\\
    \end{aligned}
\end{equation}
When dealing with a loop that diverges for the whole quantum state, the linear semantics in \autoref{eq: esLni3} evaluates to $0$.
On the other hand, the unitary semantics (\autoref{eq: esUni3}) computes all partial executions of the divergent loop.

More generally, in all examples, the unitary operator $\Wt_n$ returns the portion of the state corresponding to executions that terminate after less than $n$ iterations, along with the partial results of computations that are still ongoing. In contrast, the linear semantics can separate the terminating portion of the computation.
In fact, by examining the definition of $\Wt_n$ (\autoref{eq: Wnclose}), we can identify two key components:
$$
\prj{0_{t_h}} \cdot \prod_{i=n-h}^{n-2} \big( \bev(q, n-i) S \big) \cdot \bev(q, 1),
$$
which is also present in the linear semantics (\autoref{eq: Lnclose}), and the final term is given by:
$$
\prod_{i=1}^n \prj{1_{t_i}} \cdot \prod_{i=0}^{n-1} S \bev(q, n-i).
$$
The first part corresponds to computations where the guard becomes $0$ after the $(n-1)$-th execution, indicating the terminating of the. 
The second part represents the non-terminating portion, where the guard remains true, meaning that the loop is still in progress.
The linear semantics, therefore, returns a sub-state of the result of the unitary semantics.  
For this reason, we can consider the linear semantics as an \textit{under-approximation} of the unitary semantics.  
By discarding part of the result, the linear semantics allows the definition of a limit and, therefore, gives meaning to infinite behaviors.

\section{Conclusion and Related Work}
We have introduced a denotational semantics for quantum programs, which approximates the unitary behaviour of the programs by means of linear operators acting on possibly non-normalized states, which contain both finite and infinite results. 

Various approaches to the problem of modeling the control flow in a quantum progragram have been introduced in the literature on the design and implementation of quantum programming languages.
They can be grouped as follows.


\paragraph*{Probabilistic control flow}
The initial works on quantum program semantics focused on languages with quantum data and classical probabilistic control, primarily based on measurement operators.  
In~\cite{selinger2004towards}, Selinger introduced the basic notations, theories, and conventions for a quantum programming language with measurement-based probabilistic control flow, called QPL.  
He provided a denotational semantics for QPL by associating each program with a superoperator (a completely positive map that is not necessarily trace-preserving) in a finite-dimensional Hilbert space, represented as a morphism in a CPO-enriched traced monoidal category. 

In~\cite{perdrix2008a}, Perdrix extended this work by introducing a complete partial order (CPO) over admissible transformations, i.e., multisets of linear operators, and using it to define a denotational semantics for a simple quantum imperative language similar to QPL.  
He demonstrated that this semantics is an exact abstraction of Selinger's semantics.  

In a series of works~\cite{FENG2007151, ying2010quantumloop, ying2012defining, ying2012floyd}, Ying et al. explored a quantum while language with measurement-based probabilistic control flow.  
They defined a denotational semantics in terms of maps between density operators, generalizing Selinger's results to infinite-dimensional Hilbert spaces.  
In these works, the denotational semantics is constructed using the CPO of superoperators acting on partial density operators.  
Finally, in~\cite[Chapter 3]{MingFoundations}, Ying further developed this domain to define the semantics of a quantum language with recursion and measurement-based probabilistic control flow, providing a more comprehensive framework for reasoning about quantum programs.

\paragraph*{Quantum Control}
All the previous works focused on language with probabilistic control flow, thus their semantics is probabilistic and no superposition is introduced between possible executions of the programs.

A first form of quantum control was introduced in Altenkirch and Grattage's functional language QML~\cite{altenkirch2005functional}, a first-order functional language on finite types equipped with a categorical semantics capturing only finite quantum computations.
In \cite{lampis2008quantum}, Lampis et al. introduced nQML, a simplified version of QML with simpler control constructs and a denotational semantics based on density matrices and unitary transformations, still capturing only finite computations.

A formal definition of quantum imperative language with quantum control flow was introduced by Ying et al. in~\cite{ying2012defining,ying2014alternation}, where they define the QuGCL language, i.e. a language with both quantum control flow and measurement-based control flow but no recursion. A semantics for this language is given in terms of a new mathematical tool called the guarded composition of operator-valued functions, where operator-valued functions are defined using the Kraus operator-sum representation~\cite[Chapter 8]{MichealANielsen}.
However, this semantics is not compositional.

More recently, in ~\cite{valiron2022semantics}, Valiron has reviewed the use of quantum control in the $\lambda$-calculus, explaining the difference between superposition of terms and superposition of data in the various formulations.

In~\cite{yuan2024controlmachine}, Yuan et al. studied the problem of automatically compiling self-controlled quantum operations minimizing the use of extra temporary variables and avoiding that these extra temporary variables are entangled with the rest of the variables.
In particular, they formalize the conditions under which this compilation is possible.
Their compilation technique can be seen as a particular case of our unitary semantics.
Inspired by Yuan et al., Zhang and Ying~\cite{zhang2024quantumregister} propose a quantum architecture that supports quantum control flow and finite quantum recursion.

\paragraph*{Quantum recursion}
Introducing a construct for quantum control flow in a programming language naturally leads to the concepts of quantum recursion or quantum loops.  
An initial idea of quantum recursion, based on using an infinite set of external coins, was informally discussed in~\cite{ying2012defining}.  

The issue of quantum recursion was formally addressed for the first time by Badescu and Panangaden in~\cite{badescu2015quantum}.  
They extended the QPL programming language~\cite{selinger2004towards} by introducing a quantum if-statement and provided a denotational semantics for this extension based on Kraus decomposition.  
However, they observed that the semantics of quantum case statements is not monotone with respect to Selinger's order, concluding that the existing framework is inadequate for modeling quantum recursion.  
In our paper, we generalise this observation by demonstrating that even in a setting without measurement, it is impossible to define a limit for the sequence of unitary operators. 
This highlights the inherent challenges in modeling quantum recursion within a purely unitary framework. 

In~\cite{sabry2018symmetric},  Sabry et al. extend a classical, typed, reversible language that includes lists and fixpoints to a quantum setting.
The resulting quantum language is provided with an operational semantics following the algebraic $\lambda$-calculi principles.
This work proves that it is possible to represent with a unitary operator a quantum program with recursion, only if we are able to construct the fixpoint of the recursive call by means of a finite unfolding the recursive calls.
As we do not impose any restriction on the quantum loop, our semantics is more general.

In his PhD thesis \cite{andres2022unbounded}, Andrés-Martínez introduces a quantum while language similar to ours but equipped with a categorical semantics.
The thesis extends Haghverdi’s unique decomposition categories—originally introduced to model iteration in classical computation—by addressing their incompatibility with the quantum settings. 
This generalisation establishes connections to topological groups and leads to a hierarchy of categories enriched with infinitary addition and convergence criteria. 
Building on this foundation, the execution formula is shown to define a valid categorical trace even over categories of quantum processes on infinite-dimensional Hilbert spaces.
This approach, however, relies on a computational model which is not immediately referable to quantum circuits.
In defining our semantics, we, instead,  refer to the standard computation model of quantum computing; in fact, our unitary semantics is directly implementable on a quantum computer.

In~\cite{MingFoundations, ying2014recursion}, Ying explores a problem similar to ours, considering a recursive quantum language.
This work has a stronger similarity with our approach than the other works mentioned above, although
the implementation of the self-controlled operation is in a sense `dual' with respect to our model. In fact,
instead of making a quantum copy of the guard using a \CN\ gate, Ying proposes to prepare and carry along the program an infinite number of copies of identical qubits to represent the guard.
However, since we are trying to represent in a unitary way an operation that inherently cannot be unitary, this difference is merely a design choice to address the unitarity constraint.

\begin{figure}
    \centering
    \begin{quantikz}[wire types={q,q,q,b},row sep=.7em,classical gap=0.07cm]
        \lstick{$\ket{\psi}_{q_0}$} & \ctrl{3} & \ctrl{3} & \ctrl{3} &\\
        \lstick{$\ket{\psi}_{q_1}$} & \gate{U} & \ctrl{2} & \ctrl{2} &\\
        \lstick{$\ket{\psi}_{q_2}$} & \gate{U} & \gate{U} & \ctrl{1} &\\
        & \gate{\textbf{P}} & \gate{\textbf{P}} & \gate{\textbf{P}} &\\
    \end{quantikz}
    \caption{$\textbf{P} := \qift{q}{U(q);\textbf{P}}$ circuit}
    \label{fig:ying rec}
\end{figure}

In this setting, recursion is achieved with programs of the form $\textbf{P}:= \qift{q}{U(q);\textbf{P}}$, which can be visually represented in \autoref{fig:ying rec}.
It can be seen that each recursive call consumes a copy of the guard variable, and to achieve infinite recursion, we need an infinite number of copies of the same guard variables. 
For defining their semantics, Ying et al. employ a formalism from quantum physics related to multi-particle systems, specifically Fock spaces~\cite{fock} and second quantization~\cite{bosonfermion}. 
In particular, they consider free Fock spaces, i.e., Hilbert spaces that describe quantum states with a variable number of indistinguishable particles, constructed as $\oplus_{n=1}^\infty \Hilb^{\otimes n}$, i.e., the direct sum of tensor powers of the single-particle Hilbert space $\Hilb$.
Using this formulation, Ying defines a Complete Partial Order (CPO), which orders the operators within the Fock space based on the number of guard copies that these operators act on.
As a result, the semantics of a recursive program demonstrates a monotonically continuous order, which, in turn, allows for the existence of a fixed point.


Both ours and Ying's semantics share the key feature of utilising an infinite number of qubits to perform while loops and defining recursive unitary operators, as illustrated in \autoref{fig:whileCirc} and \autoref{fig:ying rec}.  
In fact, we can choose to formulate our approach within Ying’s Fock space semantics or, conversely, describe Ying’s semantics using our Unitary/Linear framework.  
In the latter case, we can specifically adopt the formulation introduced by Ying in \cite[Section 6.4]{ying2014recursion}, mapping the Fock space operator to a unitary operator on the program variables space as in \autoref{fig:ying rec}, and subsequently to a linear operator, in the same way as in \autoref{sec: lnSem}.






\bibliographystyle{ACM-Reference-Format}
\bibliography{bibl}

\appendix
\section{Omitted Proofs}\label{appx}
\propunind*
\begin{proof}
We proceed by induction.
If $n=0$, $\wt_0 = \Ident$ and $\Wt_0 = \sum_{k=1}^{0}(\dots) + \prod_{i=1}^{0}\prj{1_{t_i}} \cdot \prod_{i=0}^{-1}\unSem{s}\bev(q,n-i) = \Ident$.
Let's consider $\wt_{n+1} = (\prj{0_{t_1}} + \prj{1_{t_1}} \cdot \mathcal{S}(\Wt_{n}) \cdot \unSem{s}) \bev(q,1)$.
By inductive hypothesis $\wt_n = \Wt_n$,
thus we need to compute $\mathcal{S}(\Wt_{n})$:
\begin{equation*}
    \mathcal{S}(\Wt_{n}) = \sum_{k=1}^{n}(\prod_{i=1}^{k-1}(\prj{1_{t_{i+1}}}) \cdot \prj{0_{t_{k+1}}} \cdot \prod_{i=n-k}^{n-2}(\bev(q,n-i+1)\unSem{s}) \cdot \bev(q,2)) + \prod_{i=1}^{n}\prj{1_{t_{i+1}}} \cdot \prod_{i=0}^{n-1}\unSem{s}\bev(q,n-i+1).
\end{equation*}
The initial and final values of the first and third products can be updated to remove the $+1$ from the indices, resulting in:
\begin{equation*}
    \mathcal{S}(\Wt_{n}) = \sum_{k=1}^{n}(\prod_{i=2}^{(k+1)-1}(\prj{1_{t_{i}}}) \cdot \prj{0_{t_{k+1}}} \cdot \prod_{i=n-k}^{n-2}(\bev(q,(n+1)-i)\unSem{s}) \cdot \bev(q,2)) + \prod_{i=2}^{n+1}\prj{1_{t_{i}}} \cdot \prod_{i=0}^{n-1}\unSem{s}\bev(q,(n+1)-i).
\end{equation*}
Finally, the $k+1$ can be collected, and the summation indices can be updated, resulting in:
\begin{equation*}
    \mathcal{S}(\Wt_{n}) = \sum_{k=2}^{n+1}(\prod_{i=2}^{k-1}(\prj{1_{t_{i}}}) \cdot \prj{0_{t_{k}}} \cdot \prod_{i=n-(k-1)}^{n-2}(\bev(q,(n+1)-i)\unSem{s}) \cdot \bev(q,2)) + \prod_{i=2}^{n+1}\prj{1_{t_{i}}} \cdot \prod_{i=0}^{n-1}\unSem{s}\bev(q,(n+1)-i).
\end{equation*}
Now we sobstitude this equation in $\wt_{n+1} = \prj{0_{t_1}} \cdot \bev(q,1) + \prj{1_{t_1}} \cdot \mathcal{S}(\Wt_{n}) \cdot \unSem{s} \cdot \bev(q,1)$.
Initially, we compute $\wt' = \prj{1_{t_1}} \cdot \mathcal{S}(\Wt_{n})$, in particular:
\begin{equation*}
    \begin{aligned}
        &\wt' =\! \prj{1_{t_1}} \big( \sum_{k=2}^{n+1}(\prod_{i=2}^{k-1} (\prj{1_{t_{i}}}) \prj{0_{t_{k}}}  \prod_{i=n-(k-1)}^{n-2} (\bev(q,(n+1)-i)\unSem{s}) \bev(q,2)) + \prod_{i=2}^{n+1}\prj{1_{t_{i}}} \prod_{i=0}^{n-1}\unSem{s}\bev(q,(n+1)-i) \big)\\
        &=\! \sum_{k=2}^{n+1}(\prj{1_{t_1}} \prod_{i=2}^{k-1} (\prj{1_{t_{i}}}) \prj{0_{t_{k}}} \prod_{i=n-(k-1)}^{n-2} (\bev(q,(n+1)-i)\unSem{s}) \bev(q,2)) + \prj{1_{t_1}} \prod_{i=2}^{n+1}\prj{1_{t_{i}}} \prod_{i=0}^{n-1}\unSem{s}\bev(q,(n+1)-i)\\
        &=\! \sum_{k=2}^{n+1}(\prod_{i=1}^{k-1} (\prj{1_{t_{i}}}) \prj{0_{t_{k}}} \prod_{i=n-(k-1)}^{n-2} (\bev(q,(n+1)-i)\unSem{s}) \bev(q,2)) + \prod_{i=1}^{n+1}\prj{1_{t_{i}}} \prod_{i=0}^{n-1}\unSem{s}\bev(q,(n+1)-i).\\
    \end{aligned}
\end{equation*}   
Secondly, let's consider $\wt'' = \wt' \cdot \unSem{s} \cdot \bev(q,1)$, it result in:
\begin{equation*}
    \begin{aligned}
        \wt'' &=\! \Big( \sum_{k=2}^{n+1}(\prod_{i=1}^{k-1} (\prj{1_{t_{i}}}) \prj{0_{t_{k}}} \prod_{i=n-(k-1)}^{n-2} (\bev(q,(n+1)-i)\unSem{s}) \bev(q,2))\\
        & + \prod_{i=1}^{n+1}\prj{1_{t_{i}}} \prod_{i=0}^{n-1}\unSem{s}\bev(q,(n+1)-i) \Big) \unSem{s}\bev(q,1)\\
        &=\! \sum_{k=2}^{n+1}(\prod_{i=1}^{k-1} (\prj{1_{t_{i}}}) \prj{0_{t_{k}}} \prod_{i=n-(k-1)}^{n-2} (\bev(q,(n+1)-i)\unSem{s} \big) \bev(q,2) \unSem{s} \bev(q,1))\\
        & + \prod_{i=1}^{n+1}\prj{1_{t_{i}}} \prod_{i=0}^{n-1}\unSem{s}\bev(q,(n+1)-i) \unSem{s} \bev(q,1)
    \end{aligned}
\end{equation*}
If we consider the second and fourth products, we can include $\bev(q,2)\unSem{s}$ by updating the indices.
Specifically, in the first product, when $i=n-2$, we have $n+1-n+2=3$.
Therefore, by setting $n-1$ as the upper limit of the product, we can include $\bev(q,2)\unSem{s}$ in the product.
Similarly, in the fourth product, when $i=n-1$, $\unSem{s} \bev(q,2)$ is included. 
By varying the product from $0$ to $n$, we also include $\unSem{s} \bev(q,1)$.
So finally, we can write $\wt''$ as:
\begin{equation*}
    \wt'' = \sum_{k=2}^{n+1}(\prod_{i=1}^{k-1} (\prj{1_{t_{i}}}) \prj{0_{t_{k}}} \prod_{i=n-(k-1)}^{n-1} (\bev(q,(n+1)-i)\unSem{s}) \bev(q,1)) + \prod_{i=1}^{n+1}\prj{1_{t_{i}}} \prod_{i=0}^{n}\unSem{s}\bev(q,(n+1)-i).
\end{equation*}
Finally, since $\Wt_{n+1} = \prj{0_{t_1}} \bev(q,1) + \Wt''$ and $(\prod_{i=1}^{k-1} (\prj{1_{t_{i}}}) \prj{0_{t_{k}}} \prod_{i=n-(k-1)}^{n-1} (\bev(q,(n+1)-i)\unSem{s}) \bev(q,1))_{k=1} = \prj{0_{t_1}} \cdot \bev(q,1)$ we can write:
\begin{equation*}
    \wt_{n+1} = \sum_{k=1}^{n+1}(\prod_{i=1}^{k-1} (\prj{1_{t_{i}}}) \prj{0_{t_{k}}} \prod_{i=(n+1)-k}^{n-1} (\bev(q,(n+1)-i)\unSem{s}) \bev(q,1)) + \prod_{i=1}^{n+1}\prj{1_{t_{i}}} \prod_{i=0}^{n}\unSem{s}\bev(q,(n+1)-i),
\end{equation*}
i.e., we write $\Wt_{n+1}$ in the form of \autoref{eq: Wnclose}.

\end{proof}

\proplnind*
\begin{proof}
If $n=0$, $\lt_0 = 0$ and $\Lt = \sum_{k=1}^{0}(\dots) = 0$.
By inductive hypothesis,
\begin{equation*}
    \lt_n = \sum_{k=1}^{n}(\prod_{i=1}^{k-1}(\prj{1_{t_i}}) \cdot \prj{0_{t_k}} \cdot \prod_{i=n-k}^{n-2}(\bev(q,n-i)\lnSem{s}) \cdot \bev(q,1)).
\end{equation*}
Let's consider $\lt_{n+1} = (\prj{0_{t_1}} + \prj{1_{t_1}} \cdot \mathcal{S}(\lt_{n}) \cdot \lnSem{s}) \bev(q,1)$.
By inductive hypothesis $\lt_n = \Lt_n$.
First we compute $\mathcal{S}(\Lt_{n})$, in particular,
\begin{equation*}
    \mathcal{S}(\Lt_{n}) = \sum_{k=1}^{n}(\prod_{i=1}^{k-1}(\prj{1_{t_{i+1}}}) \cdot \prj{0_{t_{k+1}}} \cdot \prod_{i=n-k}^{n-2}(\bev(q,n-i+1)\lnSem{s}) \cdot \bev(q,2)).
\end{equation*}
The initial and final values of the first product can be updated to remove the $+1$ from the indices, resulting in:
\begin{equation*}
    \mathcal{S}(\Lt_{n}) = \sum_{k=1}^{n}(\prod_{i=2}^{(k+1)-1}(\prj{1_{t_{i}}}) \cdot \prj{0_{t_{k+1}}} \cdot \prod_{i=n-k}^{n-2}(\bev(q,(n+1)-i)\lnSem{s}) \cdot \bev(q,2)).
\end{equation*}
Then, the $k+1$ can be collected, and the summation indices can be updated, resulting in the following:
\begin{equation*}
    \mathcal{S}(\Lt_{n}) = \sum_{k=2}^{n+1}(\prod_{i=2}^{k-1}(\prj{1_{t_{i}}}) \cdot \prj{0_{t_{k}}} \cdot \prod_{i=n-(k-1)}^{n-2}(\bev(q,(n+1)-i)\lnSem{s}) \cdot \bev(q,2)).
\end{equation*}
Now we sobstitude this equation in $\lt_{n+1} = \prj{0_{t_1}} \cdot \bev(q,1) + \prj{1_{t_1}} \cdot \mathcal{S}(\Lt_{n}) \cdot \lnSem{s} \cdot \bev(q,1)$.
Initially, we compute $\lt' = \prj{1_{t_1}} \cdot \mathcal{S}(\Lt_{n})$, in particular:
\begin{equation*}
    \begin{aligned}
        &\lt' =\! \prj{1_{t_1}} \big( \sum_{k=2}^{n+1}\prod_{i=2}^{k-1} (\prj{1_{t_{i}}}) \prj{0_{t_{k}}}  \prod_{i=n-(k-1)}^{n-2} (\bev(q,(n+1)-i)\lnSem{s}) \bev(q,2) \big)\\
        &=\! \sum_{k=2}^{n+1}(\prj{1_{t_1}} \prod_{i=2}^{k-1} (\prj{1_{t_{i}}}) \prj{0_{t_{k}}} \prod_{i=n-(k-1)}^{n-2} (\bev(q,(n+1)-i)\lnSem{s}) \bev(q,2))\\
        &=\! \sum_{k=2}^{n+1}(\prod_{i=1}^{k-1} (\prj{1_{t_{i}}}) \prj{0_{t_{k}}} \prod_{i=n-(k-1)}^{n-2} (\bev(q,(n+1)-i)\lnSem{s}) \bev(q,2)).\\
    \end{aligned}
\end{equation*}   
Secondly, let's consider $\lt'' = \lt' \cdot \lnSem{s} \cdot \bev(q,1)$, it result in:
\begin{equation*}
    \begin{aligned}
        \lt'' &=\! \Big( \sum_{k=2}^{n+1}\prod_{i=1}^{k-1} (\prj{1_{t_{i}}}) \prj{0_{t_{k}}} \prod_{i=n-(k-1)}^{n-2} (\bev(q,(n+1)-i)\lnSem{s}) \bev(q,2) \Big) \lnSem{s}\bev(q,1) =\\
        &=\! \sum_{k=2}^{n+1}\prod_{i=1}^{k-1} (\prj{1_{t_{i}}}) \prj{0_{t_{k}}} \prod_{i=n-(k-1)}^{n-2} (\bev(q,(n+1)-i)\lnSem{s} \big) \bev(q,2) \lnSem{s} \bev(q,1).\\
    \end{aligned}
\end{equation*}

If we consider the second product, we can include $\bev(q,2)\lnSem{s}$ by updating the indices.
Specifically, when $i=n-2$, $n+1-n+2=3$, therefore, by setting $n-1$ as the upper limit of the product, we can include $\bev(q,2)\lnSem{s}$ in the product.
So finally, we can write $\lt''$ as:
\begin{equation*}
    \lt'' = \sum_{k=2}^{n+1}\prod_{i=1}^{k-1} (\prj{1_{t_{i}}}) \prj{0_{t_{k}}} \prod_{i=n-(k-1)}^{n-1} (\bev(q,(n+1)-i)\lnSem{s}) \bev(q,1).
\end{equation*}
Finally, since $\lt_{n+1} = \prj{0_{t_1}} \bev(q,1) + \lt''$ and $(\prod_{i=1}^{k-1} (\prj{1_{t_{i}}}) \prj{0_{t_{k}}} \prod_{i=n-(k-1)}^{n-1} (\bev(q,(n+1)-i)\lnSem{s}) \bev(q,1))_{k=1} = \prj{0_{t_1}} \cdot \bev(q,1)$ we can write:
\begin{equation*}
    \lt_{n+1} = \sum_{k=1}^{n+1}\prod_{i=1}^{k-1} (\prj{1_{t_{i}}}) \prj{0_{t_{k}}} \prod_{i=(n+1)-k}^{n-1} (\bev(q,(n+1)-i)\lnSem{s}) \bev(q,1),
\end{equation*}
i.e., we have writen $\lt_{n+1}$ in the form of \autoref{eq: Lnclose}.

\end{proof}

\end{document}